\numberwithin{equation}{section}
\newcommand{\bC}{\mathbb{C}}
\newcommand{\bR}{\mathbb{R}}
\newcommand{\U}{\mathrm{U}}
\newcommand{\sgn}{\mathrm{sgn}}
\def\la{\lambda}
\DeclareMathOperator*{\diag}{diag}
\DeclareMathOperator{\pr}{pr}
\DeclareMathOperator{\Tr}{Tr}
\DeclareMathOperator{\Wg}{Wg}
\theoremstyle{plain}
\newtheorem{thm}{Theorem}[section]
\newtheorem{lem}[thm]{Lemma}
\newtheorem{cor}[thm]{Corollary}
\theoremstyle{definition}
\newtheorem{example}{Example}[section]
\newtheorem{remark}{Remark}[section]
\theoremstyle{problem}
\title{Moments of Random Quantum Marginals via Weingarten Calculus}
\author{Sho Matsumoto}
\address{Graduate School of Science and Engineering, Kagoshima University}
\email{shom@sci.kagoshima-u.ac.jp}
\author{Colin McSwiggen}
\address{Courant Institute of Mathematical Sciences, New York University}
\email{csm482@nyu.edu}
\begin{document}

\maketitle

\begin{abstract}
The randomized quantum marginal problem asks about the joint distribution of the partial traces (``marginals'') of a uniform random Hermitian operator with fixed spectrum acting on a space of tensors.  We introduce a new approach to this problem based on studying the mixed moments of the entries of the marginals.  For randomized quantum marginal problems that describe systems of distinguishable particles, bosons, or fermions, we prove formulae for these mixed moments, which determine the joint distribution of the marginals completely.  Our main tool is Weingarten calculus, which provides a method for computing integrals of polynomial functions with respect to Haar measure on the unitary group.  As an application, in the case of two distinguishable particles, we prove some results on the asymptotic behavior of the marginals as the dimension of one or both Hilbert spaces goes to infinity.
\end{abstract}

\section{Introduction}

In classical mechanics, a system is just the sum of its parts.  For a system comprised of two classical particles, if I know the state (i.e., position and momentum) of each individual particle, then I know everything about the system.  In quantum mechanics, this is not the case.  Due to the phenomenon of entanglement, the state of a quantum mechanical system contains more information than the combined states of its subsystems do.

Mathematically, we can express this fact in terms of operators on Hilbert spaces.  The formalism of quantum mechanics associates, to a given physical system $S$, a Hilbert space $\mathcal{H}(S)$. (In this paper we will deal only with finite-dimensional Hilbert spaces, although the problems we study can be posed in greater generality.) The tensor product postulate of quantum mechanics states that if $S$ is a composite system formed from disjoint subsystems $S_1$ and $S_2$, then $\mathcal{H}(S) \cong \mathcal{H}(S_1) \otimes \mathcal{H}(S_2)$.

States of the system $S$ can be identified with their \textit{density matrices}, which are positive semidefinite Hermitian operators on $\mathcal{H}(S)$ with trace 1.  If $e_1, \hdots, e_m$ is a basis of $\mathcal{H}(S_1)$ and $f_1, \hdots, f_n$ is a basis of $\mathcal{H}(S_2)$, then the tensors $e_i \otimes f_j$ form a basis of $\mathcal{H}(S) \cong \mathcal{H}(S_1) \otimes \mathcal{H}(S_2)$.  Accordingly, we write the matrix entries of a density matrix $M$ on $\mathcal{H}(S)$ using double indices $11, 12, \hdots, mn$, ordered lexicographically:
$$M = (M_{ij,kl})_{ij,kl=11}^{mn}.$$
The {\it partial traces} of such a matrix are matrices $\pi_1(M)$ and $\pi_2(M)$, which are density matrices acting on $\mathcal{H}(S_1)$ and $\mathcal{H}(S_2)$ respectively, with entries
\begin{align*}
\pi_1(M)_{i,k} &= \sum_{j=1}^n M_{ij, kj}, \qquad i,k=1,\hdots,m, \\
\pi_2(M)_{j,l} &= \sum_{i=1}^m M_{ij, il}, \qquad j,l = 1, \hdots, n.
\end{align*}
These partial traces are also called the \textit{marginals} of $M$, and they have a physical meaning.  If the system $S$ is in the state $M$ and we observe only the individual subsystem $S_1$, then $S_1$ will appear to be in the state $\pi_1(M)$.  Likewise, if we observe only the individual subsystem $S_2$, then $S_2$ will appear to be in the state $\pi_2(M)$.  However, an easy dimension count reveals that in general it is not possible to reconstruct the matrix $M$ given only $\pi_1(M)$ and $\pi_2(M)$.  This is the mathematical expression of the idea that in quantum mechanics, a system is more than the sum of its parts.

Even though we cannot hope to recover $M$ from $\pi_1(M)$ and $\pi_2(M)$, we can ask a more basic question: given a density matrix $A$ on $\mathcal{H}(S_1)$ and another density matrix $B$ on $\mathcal{H}(S_2)$, is there a density matrix $M$ on $\mathcal{H}(S)$ with $\pi_1(M) = A$ and $\pi_2(M) = B$?  Typically, we make the further stipulation that $M$ must have given eigenvalues $\lambda = (\lambda_{11}, \hdots, \lambda_{mn}) \in \bR^{mn}$.  This question is a fundamental problem in quantum mechanics known as the \textit{quantum marginal problem} (or as the $N$-\textit{representability problem} in theoretical chemistry).  It asks whether the states $A$ and $B$ are compatible, in the sense that $S_1$ could be in state $A$ and $S_2$ could be in state $B$ while the composite system $S$ is some state $M$ with the required spectrum $\lambda$.

The quantum marginal problem was first posed at least as early as the 1960's and has been an active topic of research in the decades since \cite{CY, Ruskai, Stillinger}.  In the case where $\mathcal{H}(S_1)$ and $\mathcal{H}(S_2)$ are both finite dimensional, the problem was solved in 2004 by Klyachko \cite{Klyachko-marginals}, who proved that the set of compatible pairs $(A,B)$ of density matrices on $\mathcal{H}(S_1)$ and $\mathcal{H}(S_2)$ is determined by a certain system of linear inequalities on the eigenvalues of $A$ and $B$ and on $\lambda$.\footnote{However, despite these explicit compatibility criteria, actually checking whether they hold is computationally difficult on both classical and quantum computers \cite{Liu,LCV}.}  See \cite{Knut-lectures} for a review of Klyachko's solution in the finite dimensional case, as well as \cite{Schilling-review, Tyc-Vlach} for more recent general reviews of quantum marginal problems from a physics perspective.

The setup where $\mathcal{H}(S) \cong \mathcal{H}(S_1) \otimes \mathcal{H}(S_2)$ is appropriate for describing a system of two distinguishable particles, where each particle corresponds to one of the subsystems $S_1$ or $S_2$.  However, many physical systems of interest consist of indistinguishable bosons (whose states are constrained to be invariant under permutations of the particles) or indistinguishable fermions (whose states change sign when any two of the particles are exchanged).  That is, for a system of $k$ indistinguishable bosons or fermions, states correspond to density matrices on $\mathrm{Sym}^k \bC^n$ or $\wedge^k \bC^n$ respectively, where $\bC^n$ represents the Hilbert space associated to any given single-particle subsystem.  Identifying $(\bC^n)^{\otimes k} \cong \bC^{n^k}$, we can regard such density matrices as $n^k$-by-$n^k$ matrices that satisfy a symmetry or antisymmetry constraint; single-particle marginals are obtained by taking the partial trace over all but one leg of the $k$-fold tensor product.  The quantum marginal problem in this setting asks: given $\lambda \in \bR^{n^k}$ and a density matrix $A$ on $\bC^n$, does there exist an $n^k$-by-$n^k$ density matrix $M$ with eigenvalues $\lambda$, {\it satisfying the (anti)symmetry constraint}, with a single-particle marginal equal to $A$?

The above problems are qualitative in character.  They all ask a yes--no question: does a state of the composite system with the given spectrum and marginals exist?  In this paper, we study random matrix ensembles that can be regarded as \textit{quantitative} versions of these quantum marginal problems.  Concretely, we ask: given a uniform random Hermitian matrix\footnote{In mathematical treatments of quantum marginal problems, it is standard to drop the requirement that the matrices in question be density matrices (i.e.~positive semidefinite Hermitian matrices with trace 1) and instead study the partial traces of Hermitian matrices with arbitrary fixed spectra.  From here on we will follow this convention.} acting on $\bC^m \otimes \bC^n$ (or, in the case of bosons or fermions, on $\mathrm{Sym}^k \bC^n$ or $\wedge^k \bC^n$) with fixed eigenvalues, what is the probability distribution of its partial traces?  This is a refinement of the original quantum marginal problem, in the sense that the original problem asks only about the support of this distribution, whereas now we would like to describe the distribution completely.

To illustrate this randomized version of the problem in more detail, we consider the example of two distinguishable particles.  Here we take integers $m, n \ge 2$ representing the dimensions of the Hilbert spaces of the single-particle subsystems, and we fix a vector of eigenvalues $\lambda \in \mathbb{R}^{mn}$.  Let $U$ be a random $mn \times mn$ unitary matrix distributed according to the Haar probability measure\footnote{The \textit{Haar probability measure} on any compact Lie group is the unique probability measure that is invariant under both the left and right actions of the group on itself.  A random group element distributed according to the Haar probability measure can be regarded as ``uniformly distributed'' on the group.} on $\U(mn)$, and set
\begin{equation} \label{eqn:H-intro}
H=(H_{ij, kl})_{ij,kl=11}^{mn} = U \diag (\lambda) U^\dagger.
\end{equation}
Then $H$ is a uniform random $mn \times mn$ Hermitian matrix with eigenvalues $\lambda$.  The randomized quantum marginal problem in this case asks for a description of the joint distribution of the marginals $(\pi_1(H), \pi_2(H))$.

Randomized quantum marginal problems have been studied in the literature using two different approaches.  In a seminal paper, Christandl, Doran, Kousidis and Walter \cite{CDKW} used techniques from symplectic geometry to give an algorithm for computing the spectral distributions of marginals for distinguishable particles, bosons, and fermions.  In later work, Collins and the second author \cite{CM} used Fourier analysis techniques to derive integral formulae for these same distributions and to prove various properties of their densities.  While each of these approaches offers a solution to these problems in a certain sense, the existing results are not easily amenable to every type of analysis one would like to perform, and many unanswered questions remain.

This paper introduces a third approach, based on computing the mixed moments of the entries of the marginals.  A major advantage of this approach is that, at least in the case of distinguishable particles, the theorems are much more amenable to asymptotic analysis, providing a way to study the limiting behavior of the marginals as the dimensions of the Hilbert spaces go to infinity.  Our method exploits the fact that any compactly supported probability measure on $\bR^n$ is completely characterized by its \textit{mixed moments}, that is, expectations of monomials
$$\mathbb{E} \bigg[ \prod_{j=1}^p x_{i_j} \bigg ], \qquad (i_1, \hdots, i_p) \in \{1, \hdots, n\}^{\times p},$$
for all $p \ge 1$, where $x_1, \hdots, x_n$ are the standard coordinates on $\bR^n$.  For probability measures on the line, this fact was first observed by Hausdorff \cite{Hausdorff-moments}.  The distribution of the random matrix $H$ in (\ref{eqn:H-intro}) is compactly supported due to the compactness of the unitary group; therefore the distribution of $(\pi_1(H), \pi_2(H))$ is also compactly supported, since the partial traces are linear maps.  Accordingly, \textit{all} information about the joint distribution of the marginals is contained in the mixed moments
\begin{equation} \label{eqn:mm-intro}
\mathbb{E} \left[ 
\prod_{\alpha=1}^p \pi_1(H)_{i_\alpha, k_\alpha} \cdot 
\prod_{\beta=1}^q \pi_2(H)_{j_\beta, l_\beta}
\right]
\end{equation}
for all $p,q \ge 0$, $i_1,\dots,i_p,k_1,\dots,k_p \in \{1,2,\dots,m\}$, 
and $j_1,\dots,j_q, l_1,\dots,l_q \in \{1,2,\dots,n\}$. (Note that the space of $n \times n$ complex Hermitian matrices is an $n^2$-dimensional {\it real} vector space, and (\ref{eqn:mm-intro}) indeed captures the expectations of all monomials in the coordinates.)  A formula for computing the mixed moments (\ref{eqn:mm-intro}) can thus be regarded as a third type of solution to the randomized quantum marginal problem, complementary to those given in \cite{CDKW} and \cite{CM}.

The main contributions of this paper are formulae of this type.  Specifically, we derive a formula for (\ref{eqn:mm-intro}), as well as analogous formulae for mixed moments of single-particle marginals for systems of $k$ distinguishable particles, $k$ indistinguishable bosons, or $k$ indistinguishable fermions.  The proofs of these formulae rely on {\it Weingarten calculus}, a set of techniques for computing integrals of polynomial functions with respect to invariant measures on compact groups and symmetric spaces \cite{Collins03, CMN-review}.  As an application, in the case of two distinguishable particles, we prove some results on the asymptotic behavior of the marginals as the dimension of one or both Hilbert spaces goes to infinity.

\begin{remark}
Beyond quantum information theory, randomized quantum marginal problems are a natural object of study from at least three different points of view:
\begin{itemize}
\item In \textbf{quantum statistical mechanics}, a significant body of research has studied the properties of ``typical'' quantum states, that is, random states that are uniformly distributed conditional on certain information \cite{BHKRV, LloydPagels, SZ1, SZ2, SZ3}.
\item In \textbf{algebraic combinatorics}, the randomized quantum marginal problem for two distinguishable particles with $n$-dimensional Hilbert spaces is a semiclassical approximation for the tensor product multiplicities of irreducible representations of the symmetric group $S_n$; see \cite[\textsection5]{Klyachko-marginals}, \cite[\textsection7]{CDKW}, \cite[\textsection4]{CM}.  These multiplicities are known as \textit{Kronecker coefficients}.  Aside from their intrinsic combinatorial and representation-theoretic interest, they have also been a central object of study in the Geometric Complexity Theory program \cite{BI,IMW,IP}.
\item Randomized quantum marginal problems are examples of random matrix ensembles derived from projections of invariant measures on coadjoint orbits of compact Lie groups, a family of models that has recently attracted a great deal of interest in \textbf{random matrix theory}.  Other ensembles of this type include the randomized Horn's problem \cite{Z, CZ1, CZ2, CMZ, CMZ2, McS-splines, BGH, ForresterZhang, ZKF}, the randomized Schur's problem  \cite{CMZ2, CZ-SchurKostka, BGH}, and the orbital corners process \cite{Barysh, CC-corners, Z2}.  This class of random matrix ensembles has recently been studied at a high level of generality in \cite{CM}.
\end{itemize}
\end{remark}

\subsection{Organization of the paper}
In \S\ref{sec:distinguishable}, we treat the randomized quantum marginal problem for distinguishable particles.  After reviewing the necessary background on Weingarten calculus, we derive formulae for the mixed moments of entries of single-particle marginals for systems of two distinguishable particles (Theorem \ref{thm:projection_distribution}) and then for $k$ distinguishable particles (Theorem \ref{thm:projection_multi_1}).

In \S\ref{sec:bos-fer}, we consider systems of $k$ indistinguishable bosons or fermions.  Here again we prove formulae for the mixed moments of entries of single-particle marginals (Theorems \ref{thm:bos-moments} and \ref{thm:fer-moments}).  We conclude with some observations that explain why the formulae for bosons and fermions appear to be less analytically tractable than those obtained for distinguishable particles.

In \S\ref{sec:asymptotics}, we return to the case of two distinguishable particles and study the asymptotic behavior of the marginals in two regimes: the regime where the dimension of one Hilbert space goes to infinity while the dimension of the other remains fixed, and the regime where the dimensions of both Hilbert spaces go to infinity while their ratio remains fixed.  We prove formulae for the leading-order contribution to each mixed moment in both of these regimes (Theorems \ref{thm:asymp-m} and \ref{thm:asymp-n}), and in the first regime we also prove a law of large numbers for the limiting distribution of the finite-dimensional marginal (Theorem \ref{thm:fd-marg-limit}).

\section{Distinguishable particles} \label{sec:distinguishable}

\subsection{The marginal problem for two distinguishable particles} \label{subsec:our_q}

We first study the mixed moments of the single-particle marginals of a quantum mechanical system consisting of two distinguishable particles.  Concretely, we address the following problem.

Let $m,n \ge 2$ be integers, which we take to represent the dimensions of the Hilbert spaces for each of two particles.
Fix $\lambda =(\lambda_{ij})_{ij=11}^{mn} \in \mathbb{R}^{mn}$, where $ij$ and $kl$ stand for double indices with $i,k \in \{1,2,\dots,m\}$ and $j,l \in \{1,2,\dots,n\}$.
Let $U$ be a Haar-distributed unitary matrix of size $mn$.
We consider a random Hermitian matrix given by 
\begin{equation}
H=(H_{ij, kl}) = U \diag (\lambda) U^\dagger.
\end{equation}
Define the $m \times m$ Hermitian matrix $\pi_1(H) =(\pi_1(H)_{i,k})$ and 
$n \times n$ Hermitian matrix $\pi_2(H)=(\pi_2(H)_{j,l})$ by
\begin{equation} \label{eq:proj_pi}
\pi_1(H)_{i,k}= \sum_{j=1}^n H_{ij, kj}, \qquad \pi_2(H)_{j,l}= \sum_{i=1}^m H_{ij, il}.
\end{equation}
Our aim is to provide a formula for the mixed moments 
\begin{equation}
\mathbb{E} \left[ 
\prod_{\alpha=1}^p \pi_1(H)_{i_\alpha, k_\alpha} \cdot 
\prod_{\beta=1}^q \pi_2(H)_{j_\beta, l_\beta}
\right]
\end{equation}
for $i_1,\dots,i_p,k_1,\dots,k_p \in \{1,2,\dots,m\}$ and 
$j_1,\dots,j_q, l_1,\dots,l_q \in \{1,2,\dots,n\}$.

\subsection{Notation for symmetric groups}
\label{subsec:sym_gp}

Let $d$ be a positive integer and let $\mathfrak{S}_d$ be the symmetric group on $d$ elements.
For a permutation $\sigma \in \mathfrak{S}_d$ and 
two sequences of $d$ positive integers
$\bm{a}=(a_1,\dots,a_d)$, $\bm{b}=(b_1,\dots,b_d)$,
we define
\begin{equation} \label{def:delta_function}
\delta_{\sigma}(\bm{a},\bm{b})= \prod_{r=1}^d \delta(a_{\sigma(r)}, b_r)
=\begin{cases}
1 & \text{if $a_{\sigma(r)}=b_r$ for all $r=1,2,\dots,d$} \\
0 & \text{otherwise}.
\end{cases}
\end{equation}
Here $\delta(i,j)$ stands for Kronecker's delta\footnote{We avoid the use of the standard notation $\delta_{ij}$, as it would force us to write unsightly symbols like $\delta_{a_{\alpha_2},b_{\alpha_1}}$.
}.
In some cases, it is convenient to apply the cycle decomposition of $\sigma$. 
For a cycle $c=(\alpha_1 \ \alpha_2 \ \dots \  \alpha_h)$ of $\sigma$, 
we define 
\begin{equation} 
\delta_{c}(\bm{a},\bm{b})
=\delta(a_{\alpha_2},b_{\alpha_1}) \delta(a_{\alpha_3},b_{\alpha_2})
\cdots \delta( a_{\alpha_h},b_{\alpha_{h-1}}) \delta(a_{\alpha_1},b_{\alpha_h})
\end{equation} 
if $h \ge 2$, or $\delta_{c}(\bm{a},\bm{b})=\delta(a_{\alpha_1},b_{\alpha_1})$ if $h=1$.
Then it is immediate that 
\begin{equation} \label{eq:delta-cycle-decomposition}
\delta_\sigma(\bm{a},\bm{b})= \prod_{c} \delta_c(\bm{a},\bm{b}),
\end{equation}
where the product runs over all cycles $c$ in $\sigma$.

For example, if $d=10$ and 
\begin{equation} \label{eq:example_permutation}
\sigma=\begin{pmatrix} 
1 & 2 & 3 & 4 & 5 & 6 & 7 & 8 & 9 & 10 \\
3 & 7 & 1 & 9 & 5 & 4 & 6 & 10 & 2 & 8 
\end{pmatrix}
=(1\ 3)(2 \ 7 \ 6 \ 4 \ 9)(5)(8 \ 10),
\end{equation}
we see that
\begin{align*}
\delta_{\sigma} (\bm{a}, \bm{b})
&= 
\delta(a_{3}, b_1) \delta(a_7, b_2) \delta(a_1, b_3) \delta(a_9, b_4) 
\delta(a_5, b_5) \\
& \cdot \delta(a_4, b_6) \delta(a_6, b_7) \delta(a_{10},b_8) 
\delta(a_2, b_9) \delta(a_8,b_{10}) \\
&= \{ \delta(a_{3}, b_1)\delta(a_1, b_3) \} \cdot 
 \{ \delta(a_7, b_2) \delta(a_6, b_7)\delta(a_4, b_6)\delta(a_9, b_4) 
 \delta(a_2, b_9)  \} \\
 & \cdot \{ \delta(a_5, b_5) \} \cdot \{  \delta(a_{10},b_8)  \delta(a_8,b_{10})\}
\\
&= \delta_{(1 \ 3)} (\bm{a},\bm{b}) \cdot
\delta_{(2 \ 7 \ 6 \ 4 \ 9)} (\bm{a},\bm{b}) \cdot 
\delta_{(5)} (\bm{a},\bm{b}) \cdot \delta_{(8 \ 10)} (\bm{a},\bm{b}).
\end{align*}

Let $p,q$ be nonnegative integers with $p+q \ge 1$. 
We define projections
\begin{align*}
	\pr_1:& \ \mathfrak{S}_{p+q} \to \mathfrak{S}_p, \\
	\pr_2:& \ \mathfrak{S}_{p+q} \to \mathfrak{S}_q
\end{align*}
as follows.  For each $\sigma \in \mathfrak{S}_{p+q}$, we let $\pr_1(\sigma) \in \mathfrak{S}_p$ be the permutation on $p$ elements obtained by decomposing $\sigma$ into a product of cycles and erasing letters $p+1, p+2,\dots,p+q$
from each cycle.  For example, for $p=q=5$ and $\sigma$ given in \eqref{eq:example_permutation}, 
we obtain 
$\pr_1(\sigma)= (1\ 3)(2 \ {\cancel{7}} \ \cancel{6} \ 4 \ \cancel{9})(5)(\cancel{8} \ \cancel{10}) =(1 \ 3)(2 \ 4)(5) \in \mathfrak{S}_{5}$.

Similarly, we let $\pr_2(\sigma) \in \mathfrak{S}_q$ be the permutation on $q$ elements obtained by decomposing $\sigma$ into a product of cycles, erasing the letters $1, 2,\dots,p$, and replacing the remaining letters $p+1,p+2,\dots, p+q$ by $1,2,\dots, q$, respectively.  For example, for again the same $p, q$ and $\sigma$, the elimination of letters $1,2,3,4,5$ provides 
$(\cancel{1}\ \cancel{3})(\cancel{2} \ 7 \ 6 \ \cancel{4} \ 9)(\cancel{5})(8 \ 10)=(7 \ 6 \ 9)(8 \ 10)$. 
When we replace letters $6,7,8,9,10$ by $1,2,3,4,5$ respectively, we obtain  
$\pr_2(\sigma)= (2 \ 1 \ 4)(3 \ 5) \in \mathfrak{S}_{5}$.

Next we define
quantities $\kappa_1(\sigma)$ and $\kappa_2(\sigma)$ 
for $\sigma \in \mathfrak{S}_{p+q}$ in the following way. 
\begin{itemize}
\item $\kappa_1(\sigma)$ is the number of cycles $c$ in $\sigma$ satisfying $c \cap \{1,2,\dots, p\} = \emptyset$,
the latter of which means that the orbit of the cycle $c$ is included in the set $\{p+1,p+2,\dots, p+q\}$.
\item $\kappa_2(\sigma)$ is the number of cycles $c$ in $\sigma$ satisfying $c \cap \{p+1,p+2,\dots, p+q\} = \emptyset$.
\end{itemize}
For example, 
let $p=q=5$, $\sigma=(1\ 3)(2 \ 7 \ 6)(4)(5 \ 9)(8 \ 10)$. Then:
\begin{itemize}
\item Two cycles $(1 \ 3)$ and $(4)$ contribute to $\kappa_2(\sigma)$,
because orbits $\{1,3\}$ and $\{4\}$ of $\sigma$ are included in $\{1,2,3,4,5\}$;
\item  One cycle $(8 \ 10)$ contributes to $\kappa_1(\sigma)$, 
because the orbit $\{8,10\}$ is included in $\{6,7,8,9,10\}$; 
\end{itemize}
We therefore have $\kappa_2(\sigma)=2$, $\kappa_1(\sigma)=1$.

For any permutation $\sigma \in \mathfrak{S}_d$, we write $\kappa(\sigma)$, with no subscript, for the total number of cycles in the cycle decomposition of $\sigma$.

\subsection{Background on Weingarten calculus}
\label{subsec:Wg}

Our main tool in what follows will be {\it Weingarten calculus}, a set of techniques for computing integrals of polynomial functions with respect to invariant measures on compact groups and symmetric spaces.  Weingarten calculus was introduced in \cite{Collins03} and has since been developed by numerous authors.  Here we merely recall some basic definitions and facts from Weingarten calculus, and refer the reader to \cite{CMN-review} for a more comprehensive introduction and survey of the literature.

Recall that for $d$ a positive integer, the irreducible representations of $\mathfrak{S}_d$ are indexed by partitions $\lambda \vdash d$, i.e.~weakly decreasing sequences of nonnegative integers $(\lambda_1, \hdots, \lambda_l)$ such that $\lambda_1 + \cdots + \lambda_l = d$.\footnote{See e.g.~\cite{Ful97} for background on the representation theory of symmetric groups.} For each such partition $\lambda = (\lambda_1, \hdots, \lambda_l)$, we write $\ell(\lambda) = l$ for the length of $\lambda$ and $\chi^\lambda$ for the corresponding irreducible character of $\mathfrak{S}_d$.  For a positive integer $N$, set
$$C_\lambda(N) = \prod_{i = 1}^{\ell(\lambda)} \prod_{j=1}^{\lambda_i} (N + j - i).$$
The {\it unitary Weingarten function} $\Wg_{N,d} : \mathfrak{S}_d \to \bC$ is defined by
\begin{equation} \label{eqn:Wg-def}
\Wg_{N,d}(\sigma) = \frac{1}{d!} \sum_{\substack{\lambda \vdash d \\ C_\lambda(N) \ne 0}} \frac{\chi^\lambda(\mathrm{id}_d) \chi^\lambda(\sigma)}{C_\lambda(N)}, \qquad \sigma \in \mathfrak{S}_d,
\end{equation}
where $\mathrm{id}_d \in \mathfrak{S}_d$ is the identity permutation.  Where the value of $d$ can be understood from context, we will usually suppress the explicit dependence on $d$ in the notation and simply write $\Wg_{N}$ for $\Wg_{N,d}$.

Some explicit values for $\Wg_{N,d}$ are as follows.
\begin{align}
\Wg_{N,1}((1)) &= \frac{1}{N}, \label{Wg_U_1} \\
\Wg_{N,2}((1)(2)) &= \frac{1}{(N+1)(N-1)}, 
\label{Wg_U_11}\\
\Wg_{N,2}((1 \ 2)) &= \frac{-1}{N(N+1)(N-1)}. \label{Wg_U_2}
\end{align}

The asymptotics of $\Wg_{N,d}$ as $N$ grows large are well understood.  Notably, the following corollary of \cite[Theorem 2.2]{Collins03} describes the leading-order behavior.

\begin{lem} \label{lem:Wg-asymp}
For a fixed positive integer $d$ and $\sigma \in \mathfrak{S}_d$, as $N \to \infty$,
\begin{equation} \label{eqn:Wg-asymp}
\Wg_{N,d}(\sigma) = 
\begin{cases}
N^{-d} \cdot (1+O(N^{-1})) & \text{if $\sigma=\mathrm{id}_d$}, \\
O(N^{-d-1}) & \text{otherwise}.
\end{cases}
\end{equation}
\end{lem}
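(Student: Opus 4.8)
The plan is to read off the leading asymptotics directly from the spectral formula \eqref{eqn:Wg-def}, using two facts: each denominator $C_\lambda(N)$ is a monic polynomial in $N$ of degree exactly $d$, and the numerators $\chi^\lambda(\mathrm{id}_d)\chi^\lambda(\sigma)$, summed over $\lambda \vdash d$, are controlled by the orthogonality relations for the character table of $\mathfrak{S}_d$.

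First I would remove the restriction $C_\lambda(N) \neq 0$. Each factor in $C_\lambda(N) = \prod_{i=1}^{\ell(\lambda)}\prod_{j=1}^{\lambda_i}(N+j-i)$ satisfies $N + j - i \geq N + 1 - \ell(\lambda) \geq N + 1 - d$, so $C_\lambda(N) \geq 1$ for every $\lambda \vdash d$ as soon as $N \geq d$; since we are letting $N \to \infty$, we may assume the sum in \eqref{eqn:Wg-def} runs over all (finitely many, $d$ being fixed) partitions of $d$. Because $C_\lambda(N)$ is a product of $\sum_i \lambda_i = d$ monic linear factors in $N$, it is a monic polynomial of degree $d$, whence $C_\lambda(N)^{-1} = N^{-d}\bigl(1 + O(N^{-1})\bigr)$ as $N \to \infty$, with implied constant depending only on $\lambda$, hence only on $d$.

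Next I would substitute this expansion into \eqref{eqn:Wg-def} and split off the leading contribution:
\begin{equation*}
\Wg_{N,d}(\sigma) = \frac{N^{-d}}{d!} \sum_{\lambda \vdash d} \chi^\lambda(\mathrm{id}_d)\,\chi^\lambda(\sigma) \;+\; O(N^{-d-1}),
\end{equation*}
where the error is $O(N^{-d-1})$ because it is a sum of finitely many terms, each a bounded character product times $O(N^{-d-1})$. Now $\sum_{\lambda \vdash d} \chi^\lambda(\mathrm{id}_d)\chi^\lambda(\sigma)$ is the value at $\sigma$ of the character of the regular representation of $\mathfrak{S}_d$ --- equivalently, this is column orthogonality of the character table applied to the pairing of the $\mathrm{id}_d$ column with the $\sigma$ column --- so it equals $d!$ if $\sigma = \mathrm{id}_d$ and $0$ otherwise. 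Substituting gives $\Wg_{N,d}(\mathrm{id}_d) = N^{-d} + O(N^{-d-1}) = N^{-d}(1 + O(N^{-1}))$ and $\Wg_{N,d}(\sigma) = O(N^{-d-1})$ for $\sigma \neq \mathrm{id}_d$, which is \eqref{eqn:Wg-asymp}.

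There is no substantial obstacle here: the argument is essentially bookkeeping with the defining formula. The only point worth a second thought is that, for $\sigma \neq \mathrm{id}_d$, the putative leading term of order $N^{-d}$ vanishes \emph{identically} by orthogonality rather than merely being small, which is what produces the gain of a full power of $N$. I would also note in passing that this argument does not by itself yield the finer asymptotic $\Wg_{N,d}(\sigma) \asymp N^{-(2d-\kappa(\sigma))}$ from \cite{Collins03}; obtaining that would require tracking the subleading coefficients of $C_\lambda(N)^{-1}$, but it is not needed for the statement at hand.
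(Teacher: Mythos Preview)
Your proof is correct. The paper does not actually give a proof of this lemma; it simply records it as a corollary of \cite[Theorem 2.2]{Collins03}, which establishes the full asymptotic expansion of $\Wg_{N,d}(\sigma)$ in powers of $N^{-1}$ (in particular the sharper order $N^{-(2d-\kappa(\sigma))}$). Your route is genuinely different and more elementary: you work directly from the defining sum \eqref{eqn:Wg-def}, observe that each $C_\lambda(N)$ is a monic degree-$d$ polynomial in $N$, and then invoke column orthogonality for $\mathfrak{S}_d$ (equivalently, the character of the regular representation) to kill the would-be $N^{-d}$ term when $\sigma \neq \mathrm{id}_d$. This is entirely self-contained and at exactly the level of precision the paper needs, since only the dichotomy $N^{-d}$ versus $O(N^{-d-1})$ is ever used downstream in \S\ref{sec:asymptotics}. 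The only thing you give up relative to the cited theorem is the finer exponent for non-identity $\sigma$, which, as you correctly remark, is irrelevant here.
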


Weingarten calculus gives a method for writing the expectations of polynomial functions of the entries of a Haar-distributed random unitary matrix in terms of values of $\Wg_N$.  In particular, below we will make extensive use of the following result, which is a special case of \cite[Theorem 3.1]{CMS}.

\begin{lem}
\label{lem:CMS}
Let $N,d$ be positive integers.
Fix a real Hermitian matrix $A$ of size $N$.
Let $U$ be a Haar-distributed matrix from the unitary group $\mathrm{U}(N)$. For a random matrix $W$ given by $W=(W_{ik})_{i,k=1}^N = U A U^{\dagger}$ and for two sequences 
\[
\bm{i}=(i_1,\dots,i_d), \quad \bm{k}=(k_1,\dots, k_d) \quad \in \{1,2,\dots,N\}^{\times d}, 
\]
we have 
\[
\mathbb{E}[W_{i_1,k_1} W_{i_2, k_2} \cdots W_{i_d, k_d}] =
\sum_{\sigma, \tau \in \mathfrak{S}_d} \delta_\sigma(\bm{i},\bm{k}) 
\Wg_{N}(\sigma^{-1}\tau) \Tr_\tau(A).
\]
Here
\begin{itemize}
\item  
$\delta_{\sigma}(\bm{i},\bm{k})$ is defined in \eqref{def:delta_function};
\item $\Wg_{N}$ is the unitary Weingarten function defined in (\ref{eqn:Wg-def}); 
\item If the cycle type of $\tau \in \mathfrak{S}_d$ is $(\mu_1,\mu_2,\dots)$, set 
\begin{equation}
\Tr_\tau(A) := \prod_{j \ge 1} \Tr(A^{\mu_j}).
\end{equation}
\end{itemize}
\end{lem}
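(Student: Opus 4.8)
The plan is to reduce the claim to the fundamental integration formula of Weingarten calculus for products of entries of a Haar-random unitary. First I would expand each factor of $W = UAU^{\dagger}$ as
\[
W_{i_r,k_r}=\sum_{a_r,b_r=1}^{N}U_{i_r,a_r}\,A_{a_r,b_r}\,\overline{U_{k_r,b_r}},
\]
and substitute into the left-hand side. Since $A$ is deterministic, linearity of expectation gives
\[
\mathbb{E}\!\left[\prod_{r=1}^{d}W_{i_r,k_r}\right]
=\sum_{\bm a,\bm b}\Bigl(\prod_{r=1}^{d}A_{a_r,b_r}\Bigr)\,
\mathbb{E}\!\left[\prod_{r=1}^{d}U_{i_r,a_r}\,\overline{U_{k_r,b_r}}\right],
\]
the sums over $\bm a=(a_1,\dots,a_d)$ and $\bm b=(b_1,\dots,b_d)$ running over $\{1,\dots,N\}^{\times d}$; note that the numbers of $U$- and $\overline U$-factors agree, so this is the balanced case in which the relevant Haar integral is nonzero.

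Next I would apply the Collins--Śniady formula (see \cite{CMN-review}), which expresses $\mathbb{E}[\prod_r U_{i_r,a_r}\overline{U_{k_r,b_r}}]$ as a double sum over $(\sigma,\tau)\in\mathfrak{S}_d\times\mathfrak{S}_d$ of $\Wg_N$ evaluated at an appropriate product of $\sigma^{\pm1}$ and $\tau^{\pm1}$, times a product of Kronecker deltas matching the row indices $\bm i,\bm k$ through one permutation and the column indices $\bm a,\bm b$ through the other. In the notation of \eqref{def:delta_function} the row-index factor is $\delta_\sigma(\bm i,\bm k)$ and the column-index factor is $\delta_\tau(\bm a,\bm b)$ (after a harmless inversion of $\tau$). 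Substituting and exchanging the order of summation, the Weingarten factor and $\delta_\sigma(\bm i,\bm k)$ pull out of the $\bm a,\bm b$-sum, leaving, for each $\tau$, the quantity $\sum_{\bm a,\bm b}\bigl(\prod_r A_{a_r,b_r}\bigr)\delta_\tau(\bm a,\bm b)$.

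The last step is to identify this sum with $\Tr_\tau(A)$. Factoring $\delta_\tau(\bm a,\bm b)$ over the cycles of $\tau$ via \eqref{eq:delta-cycle-decomposition}, the deltas in a cycle of length $h$ identify the corresponding $h$ column indices cyclically, so the associated product of $h$ entries of $A$ sums over the remaining free indices to $\Tr(A^{h})$; multiplying over all cycles gives $\prod_{j\ge1}\Tr(A^{\mu_j})=\Tr_\tau(A)$, where $(\mu_1,\mu_2,\dots)$ is the cycle type of $\tau$. Collecting terms and relabeling $\sigma,\tau$ so that the Weingarten argument takes the stated form $\sigma^{-1}\tau$ (legitimate because $\Wg_N$ is a class function invariant under inversion, as is immediate from \eqref{eqn:Wg-def} together with the reality of the characters $\chi^\lambda$) completes the argument. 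I expect the only genuine difficulty to be bookkeeping: one must keep careful track of how $\sigma$ and $\tau$ act on the index sequences so that the deltas emerge as $\delta_\sigma(\bm i,\bm k)$ rather than $\delta_{\sigma^{-1}}(\bm i,\bm k)$, and so that the permutation fed to $\Wg_N$ comes out as $\sigma^{-1}\tau$; all the natural variants coincide after the admissible relabelings, but this is exactly where a convention slip would occur. (The hypothesis that $A$ be real Hermitian plays no role in the identity itself and is retained only in view of the intended application to $H=U\diag(\lambda)U^{\dagger}$.)
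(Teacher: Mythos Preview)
Your argument is correct and is precisely the standard derivation of this moment formula from the Collins--\'Sniady integration formula; the only delicate point is the bookkeeping you flagged, and your remark that $\Wg_N$ is a class function invariant under inversion (which follows from \eqref{eqn:Wg-def} since the $\chi^\lambda$ are real and invariant under conjugation and inversion) is exactly what is needed to reconcile the various conventions. The identification $\sum_{\bm a,\bm b}\bigl(\prod_r A_{a_r,b_r}\bigr)\delta_\tau(\bm a,\bm b)=\Tr_\tau(A)$ is correct: the constraint $b_r=a_{\tau(r)}$ collapses the sum to $\sum_{\bm a}\prod_r A_{a_r,a_{\tau(r)}}$, and splitting this product along the cycles of $\tau$ yields $\prod_j \Tr(A^{\mu_j})$.

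The paper itself does not give a proof of this lemma at all: it simply states the result as a special case of \cite[Theorem~3.1]{CMS} and moves on. So your write-up is not a comparison point but rather a self-contained justification that the cited result indeed specializes to the claimed identity. If you want to make the sketch airtight, I would suggest writing out one explicit version of the Collins--\'Sniady formula (e.g.\ $\mathbb{E}\bigl[\prod_r U_{i_r a_r}\overline{U_{k_r b_r}}\bigr]=\sum_{\sigma,\tau}\delta_\sigma(\bm i,\bm k)\,\delta_\tau(\bm a,\bm b)\,\Wg_N(\sigma^{-1}\tau)$ in the paper's $\delta$-convention) rather than leaving the matching of conventions to the reader; this removes the one place where a slip could occur.
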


\subsection{The moment formula for two distinguishable particles}

We are now ready to state our first main result.  Let $\lambda, H, \pi_1(H), \pi_2(H)$ be as in \S \ref{subsec:our_q}.

\begin{thm} \label{thm:projection_distribution}
Let $p, q$ be nonnegative integers. For sequences of indices
\begin{align*}
\bm{i}=(i_1,i_2,\dots,i_p), \ \bm{k}=(k_1,k_2,\dots,k_p)  \quad \in \{1,2,\dots,m\}^{\times p} &, \\ 
\bm{j}=(j_1, j_2, \dots, j_q), \ \bm{l}=(l_1,l_2,\dots,l_q)  \quad \in \{1,2,\dots,n\}^{\times q}, &
\end{align*}
we have the formula
\begin{multline}
\mathbb{E} \left[ 
\prod_{\alpha=1}^p \pi_1(H)_{i_\alpha, k_\alpha} \cdot 
\prod_{\beta=1}^q \pi_2(H)_{j_\beta, l_\beta}
\right] \\
= \sum_{\sigma ,\tau \in \mathfrak{S}_{p+q}} 
\delta_{\pr_1 (\sigma)} (\bm{i},\bm{k}) \, \delta_{\pr_2 (\sigma)} (\bm{j},\bm{l}) \,
n^{\kappa_2(\sigma)} m^{\kappa_1(\sigma)} \Wg_{mn}(\sigma^{-1}\tau) \Tr_\tau(\lambda),
\end{multline}
with the following notation:
\begin{itemize}
\item $\delta$ is defined in \eqref{def:delta_function};
\item $\pr_1$, $\pr_2$, $\kappa_1$, $\kappa_2$ are defined in \S \ref{subsec:sym_gp};
\item $\Wg_{mn}$ and $\Tr_\tau$ are defined in \S\ref{subsec:Wg},
but $\Tr_\tau(\lambda) :=\Tr_\tau(\diag(\lambda))$.
\end{itemize}
\end{thm}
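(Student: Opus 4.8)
\textbf{The plan} is to reduce the computation to a single application of Lemma~\ref{lem:CMS}, followed by a purely combinatorial simplification of the resulting sum over auxiliary trace indices. First I would unfold the partial traces: writing $\pi_1(H)_{i_\alpha,k_\alpha}=\sum_{a_\alpha=1}^{n}H_{i_\alpha a_\alpha,\,k_\alpha a_\alpha}$ and $\pi_2(H)_{j_\beta,l_\beta}=\sum_{b_\beta=1}^{m}H_{b_\beta j_\beta,\,b_\beta l_\beta}$, and identifying $\bC^m\otimes\bC^n$ with $\bC^{mn}$ through the lexicographic double-index convention, the left-hand side becomes
\[
\sum_{\substack{a_1,\dots,a_p\in\{1,\dots,n\}\\ b_1,\dots,b_q\in\{1,\dots,m\}}}
\mathbb{E}\!\left[\prod_{\alpha=1}^{p}H_{(i_\alpha,a_\alpha),(k_\alpha,a_\alpha)}\prod_{\beta=1}^{q}H_{(b_\beta,j_\beta),(b_\beta,l_\beta)}\right],
\]
a sum of $(p+q)$-fold mixed moments of the entries of $H=U\diag(\lambda)U^\dagger$. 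To each inner expectation I apply Lemma~\ref{lem:CMS} with $N=mn$, $d=p+q$, $A=\diag(\lambda)$, and the row/column index sequences $\mathbf I=\big((i_1,a_1),\dots,(i_p,a_p),(b_1,j_1),\dots,(b_q,j_q)\big)$ and $\mathbf K=\big((k_1,a_1),\dots,(k_p,a_p),(b_1,l_1),\dots,(b_q,l_q)\big)$. This rewrites it as $\sum_{\sigma,\tau\in\mathfrak{S}_{p+q}}\delta_\sigma(\mathbf I,\mathbf K)\,\Wg_{mn}(\sigma^{-1}\tau)\,\Tr_\tau(\lambda)$; since the Weingarten and trace factors do not depend on the $a$'s and $b$'s, everything comes down to evaluating $\sum_{a_1,\dots,a_p}\sum_{b_1,\dots,b_q}\delta_\sigma(\mathbf I,\mathbf K)$ for fixed $\sigma$.

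The heart of the argument is the claim that this index sum equals $n^{\kappa_2(\sigma)}m^{\kappa_1(\sigma)}\delta_{\pr_1(\sigma)}(\bm i,\bm k)\,\delta_{\pr_2(\sigma)}(\bm j,\bm l)$. Since a Kronecker delta of pairs splits into a product of deltas of the components, $\delta_\sigma(\mathbf I,\mathbf K)=\Delta_1\Delta_2$ where $\Delta_1=\delta_\sigma\big((i_1,\dots,i_p,b_1,\dots,b_q),(k_1,\dots,k_p,b_1,\dots,b_q)\big)$ depends only on the $b$'s and $\Delta_2=\delta_\sigma\big((a_1,\dots,a_p,j_1,\dots,j_q),(a_1,\dots,a_p,l_1,\dots,l_q)\big)$ depends only on the $a$'s. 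Using the cycle factorization~\eqref{eq:delta-cycle-decomposition}, both $\Delta_1$ and $\Delta_2$ factor over the cycles of $\sigma$, and since each $a_\alpha$ and each $b_\beta$ occurs in the factor coming from exactly one cycle, the whole index sum factors as a product over the cycles $c$ of $\sigma$. I would then analyze a single cycle $c$ in three cases. If $c\subseteq\{1,\dots,p\}$, its $\Delta_1$-factor is already $\delta_{\pr_1(c)}(\bm i,\bm k)$ and no $b$'s occur in it, while summing its $\Delta_2$-factor over the $a$'s it involves forces them all equal and contributes a factor $n$; symmetrically, a cycle $c\subseteq\{p+1,\dots,p+q\}$ contributes $m\cdot\delta_{\pr_2(c)}(\bm j,\bm l)$. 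If $c$ is a mixed cycle, then within each maximal run of letters $>p$ appearing consecutively in $c$ the associated chain of deltas collapses, under the sum over the corresponding $b$'s, to the single delta linking the two flanking letters $\le p$ — which is exactly the delta that $\pr_1$ creates when it deletes that run — so the $b$-sum of the $\Delta_1$-factor of $c$ equals $\delta_{\pr_1(c)}(\bm i,\bm k)$, and the same reasoning applied to the $a$'s and $\Delta_2$ gives $\delta_{\pr_2(c)}(\bm j,\bm l)$. Multiplying these cycle contributions, and using that the cycle decompositions of $\pr_1(\sigma)$ and $\pr_2(\sigma)$ are $\prod_c\pr_1(c)$ and $\prod_c\pr_2(c)$ while $\kappa_1(\sigma),\kappa_2(\sigma)$ count exactly the pure cycles of each type, yields the claimed value of the index sum. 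Plugging it back into the Weingarten sum produces the formula in the statement.

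The steps of unfolding the traces and invoking Lemma~\ref{lem:CMS} are routine; \textbf{the main obstacle} is the combinatorial bookkeeping for the index sum, specifically verifying that deleting the letters $p+1,\dots,p+q$ from a cycle of $\sigma$ has precisely the same effect on the chain of Kronecker deltas as carrying out the summation over the auxiliary indices, and that the pure cycles are exactly what generate the powers of $m$ and $n$. Getting the orientation conventions in $\delta_\sigma$, $\pr_1$ and $\pr_2$ mutually consistent is the place where care is needed, so I would first sanity-check the mixed-cycle computation on the permutation in~\eqref{eq:example_permutation} before writing the general argument.
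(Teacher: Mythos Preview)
Your proposal is correct and follows essentially the same route as the paper: unfold the partial traces, apply Lemma~\ref{lem:CMS}, split $\delta_\sigma(\mathbf I,\mathbf K)$ into its two component $\delta_\sigma$'s, and evaluate each of the two resulting index sums by factoring over the cycles of $\sigma$ and treating the three cases (pure $\le p$, pure $>p$, mixed) exactly as you describe. The only cosmetic difference is that the paper packages the combinatorial step as a separate lemma, proving $\sum_{\bm s}\delta_\sigma(\bm i\cup\bm s,\bm k\cup\bm s)=\delta_{\pr_1(\sigma)}(\bm i,\bm k)\,m^{\kappa_1(\sigma)}$ and its symmetric partner individually rather than handling $\Delta_1$ and $\Delta_2$ simultaneously per cycle, but the case analysis and the outcome are identical.
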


\subsection{Examples}

If we set $q=0$ or $p=0$ in Theorem \ref{thm:projection_distribution}, 
then we find, respectively:
\begin{align}
\mathbb{E} \left[ 
\prod_{\alpha=1}^p \pi_1 (H)_{i_\alpha, k_\alpha} \right] 
&= \sum_{\sigma ,\tau \in \mathfrak{S}_{p}} 
\delta_{\sigma} (\bm{i},\bm{k}) \,
n^{\kappa(\sigma)} \Wg_{mn}(\sigma^{-1}\tau) \Tr_\tau(\lambda), \\
\mathbb{E} \left[ 
\prod_{\beta=1}^q \pi_2 (H)_{j_\beta, l_\beta} \right] 
&= \sum_{\sigma ,\tau \in \mathfrak{S}_{q}} 
\delta_{\sigma} (\bm{j},\bm{l}) \,
m^{\kappa(\sigma)} \Wg_{mn}(\sigma^{-1}\tau) \Tr_\tau(\lambda),
\end{align}
where $\kappa(\sigma)$ indicates the total number of cycles in $\sigma$, as defined in  \S \ref{subsec:sym_gp}.  In particular, using \eqref{Wg_U_1}, 
\begin{equation}
\mathbb{E} \left[ \pi_1 (H)_{ik} \right]
= \delta(i,k) \, m^{-1} \Tr(\lambda),
\qquad 
\mathbb{E} \left[ \pi_2 (H)_{jl} \right]
= \delta(j,l) \, n^{-1} \Tr(\lambda),
\end{equation}
where
$\Tr(\lambda)= \lambda_{11}+\lambda_{12}+ \cdots+ \lambda_{mn}$.

Next, if we set $p=q=1$ in Theorem \ref{thm:projection_distribution}, 
then 
\begin{equation} \label{eq:p=q=1}
\mathbb{E} \left[ 
\pi_1 (H)_{i, k} \, \pi_2(H)_{j,l}
\right] 
= \delta(i,k) \, \delta( j,l) \, (mn)^{-1} \left(\Tr(\lambda)\right)^2.
\end{equation}
To see this, note that
for each $\sigma \in \mathfrak{S}_2$, we have $\pr_{1}(\sigma)=\pr_2(\sigma)= (1) \in \mathfrak{S}_1$,
so that
$$\delta_{\pr_1 (\sigma)} (i,k)  \delta_{\pr_2 (\sigma)} (j,l)
=\delta(i,k) \delta(j,l).$$
Theorem \ref{thm:projection_distribution} implies 
\begin{align*}
& \mathbb{E} \left[ 
\pi_1 (H)_{i, k}  \pi_2(H)_{j,l}\right] \\
&=  \delta(i,k) \delta(j,l)
\Big\{ 
 \underbrace{ n m \Wg_{mn} ( (1)(2) ) \Tr_{(1) (2)}(\lambda) }_{\sigma=\tau=(1)(2)}  
+\underbrace{\Wg_{mn} ( (1 \ 2) ) \Tr_{(1) (2)}(\lambda)}_{\sigma=(1 \ 2),\ \tau=(1)(2) }   
  \\
& \qquad + \underbrace{ n m \Wg_{mn} ( (1 \ 2) )  \Tr_{(1 \ 2)}(\lambda)}_{\sigma=(1)(2), \ 
\tau=(1 \ 2)}
+\underbrace{\Wg_{mn} ( (1)(2) )  \Tr_{(1 \ 2)}(\lambda)}_{\sigma=\tau=(1 \ 2)}  
\Big\}.
\end{align*}
Using \eqref{Wg_U_11} and \eqref{Wg_U_2}
we have
\begin{align*}
mn \Wg_{mn} ( (1)(2) ) + \Wg_{mn} ( (1 \ 2) )  
&= \frac{1}{mn},  \\
mn \Wg_{mn} ( (1 \ 2) ) + \Wg_{mn} ( (1 )(2) )  
&= 0. 
\end{align*}
Thus, we have proved \eqref{eq:p=q=1}.

\subsection{Proof of Theorem \ref{thm:projection_distribution}}

We start by introducing some notation to simplify handling the sequences of indices
\begin{align*}
\bm{i}=(i_1,i_2,\dots,i_p), \ \bm{k}=(k_1,k_2,\dots,k_p)  \quad \in \{1,2,\dots,m\}^{\times p} &, \\ 
\bm{j}=(j_1, j_2, \dots, j_q), \ \bm{l}=(l_1,l_2,\dots,l_q)  \quad \in \{1,2,\dots,n\}^{\times q}. &
\end{align*}
We write $\bm{i} \cup \bm{s}$ for the sequence $(i_1,i_2,\dots,i_p,s_1,s_2,\dots, s_q)$ of length $p+q$, and we define $\bm{k}\cup \bm{s}$, $ \bm{t} \cup \bm{j}$,  and  $\bm{t} \cup \bm{l}$ analogously.  We also define sequences of {\it double} indices,
\begin{align*}
\bm{i} \cup \bm{s} | \bm{t} \cup \bm{j} &= (i_1 t_1, \hdots, i_p t_p, s_1 j_1, \hdots, s_q j_q), \\
\bm{k} \cup \bm{s} | \bm{t} \cup \bm{l} &= (k_1 t_1, \hdots, k_p t_p, s_1 l_1, \hdots, s_q l_q).
\end{align*}
From \eqref{eq:proj_pi} we have
\begin{multline} \label{eqn:exp-linearity}
\mathbb{E} \left[ 
\prod_{\alpha=1}^p \pi_1 (H)_{i_\alpha, k_\alpha} \cdot \prod_{\beta=1}^{q} \pi_2 (H)_{j_\beta, l_\beta} 
\right] \\
= \sum_{t_1, \dots, t_p=1}^n  \sum_{s_{1}, \dots, s_{q}=1}^m
\mathbb{E}\left[ 
\prod_{\alpha=1}^p H_{i_\alpha t_\alpha, k_\alpha t_\alpha} \cdot \prod_{\beta=1}^{q} H_{s_\beta j_\beta, s_\beta l_\beta} 
\right].
\end{multline}
Note that in each term in the sum above, the row index sequence in the quantity in brackets
$$\prod_{\alpha=1}^p H_{i_\alpha t_\alpha, k_\alpha t_\alpha} \cdot \prod_{\beta=1}^{q} H_{s_\beta j_\beta, s_\beta l_\beta}$$
is  $\bm{i} \cup \bm{s} | \bm{t} \cup \bm{j}$, while the column index sequence is  $\bm{k} \cup \bm{s} | \bm{t} \cup \bm{l}$.  Applying Lemma \ref{lem:CMS} to each term, the right-hand side of (\ref{eqn:exp-linearity}) then becomes
\[
\sum_{t_1, \dots, t_p=1}^n  \sum_{s_{1}, \dots, s_{q}=1}^m
\sum_{\sigma, \tau \in \mathfrak{S}_{p+q}} 
\delta_{\sigma} (\bm{i}\cup \bm{s}, \bm{k}\cup \bm{s}) \, 
\delta_{\sigma} ( \bm{t} \cup \bm{j}, \bm{t} \cup \bm{l}) 
\Wg_{mn} (\sigma^{-1}\tau) \Tr_\tau(\lambda).
\]
Changing the order of sums, the above expression is equal to
\begin{multline*}
\sum_{\sigma, \tau \in \mathfrak{S}_{p+q}}
\left[ \sum_{s_{1}, \dots, s_{q}=1}^m
\delta_{\sigma} (\bm{i}\cup \bm{s}, \bm{k}\cup \bm{s}) \right] 
\left[  \sum_{t_1, \dots, t_p=1}^n  
\delta_{\sigma} (\bm{t} \cup \bm{j} , \bm{t} \cup \bm{l})  \right]  \\
 \qquad \qquad \times \Wg_{mn} (\sigma^{-1}\tau) \Tr_\tau(\lambda).
\end{multline*}
Therefore Theorem \ref{thm:projection_distribution} follows from the lemma below.

\begin{lem}
Let $\bm{i}, \bm{j},\bm{k},\bm{l}$ be as in Theorem \ref{thm:projection_distribution}.
For each $\sigma \in \mathfrak{S}_{p+q}$, we have
\begin{align}
\sum_{s_{1}, \dots, s_{q}=1}^m
\delta_{\sigma} (\bm{i}\cup \bm{s}, \bm{k}\cup \bm{s}) 
&= \delta_{\pr_1 (\sigma)}(\bm{i},\bm{k}) m^{\kappa_1(\sigma)}, 
\label{eq:key-lemma-1}\\
\sum_{t_1, \dots, t_p=1}^n  
\delta_{\sigma} (\bm{t} \cup \bm{j} , \bm{t} \cup \bm{l}) 
&= \delta_{\pr_2 (\sigma)} (\bm{j},\bm{l}) n^{\kappa_2(\sigma)}.
\end{align}
\end{lem}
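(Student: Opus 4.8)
The plan is to prove the first identity \eqref{eq:key-lemma-1}; the second is obtained in exactly the same way, with the two index blocks interchanged (one sums over $\bm{t}\in\{1,\dots,n\}^{\times p}$ rather than over $\bm{s}\in\{1,\dots,m\}^{\times q}$, and $\pr_2,\kappa_2$ replace $\pr_1,\kappa_1$). First I would apply the cycle factorization \eqref{eq:delta-cycle-decomposition} to write $\delta_\sigma(\bm{i}\cup\bm{s},\bm{k}\cup\bm{s})=\prod_c\delta_c(\bm{i}\cup\bm{s},\bm{k}\cup\bm{s})$, the product being over the cycles $c$ of $\sigma$. The key point is that a summation variable $s_j$ occurs in $\delta_c$ only when the letter $p+j$ belongs to $c$, and since every letter of $\{p+1,\dots,p+q\}$ lies in exactly one cycle of $\sigma$, the sum over $\bm{s}$ breaks up as a product over cycles:
\[
\sum_{s_1,\dots,s_q=1}^m\ \prod_c\delta_c(\bm{i}\cup\bm{s},\bm{k}\cup\bm{s})=\prod_c\Bigl(\ \sum_{s_j\,:\,p+j\in c}\delta_c(\bm{i}\cup\bm{s},\bm{k}\cup\bm{s})\Bigr),
\]
so it suffices to evaluate each single-cycle factor.

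Next I would treat a single cycle $c=(\alpha_1\ \alpha_2\ \cdots\ \alpha_h)$ of $\sigma$ in three cases. If $c\subseteq\{1,\dots,p\}$, no $s$-variable occurs, $c$ is already a cycle of $\pr_1(\sigma)$, and the factor equals $\delta_c(\bm{i},\bm{k})$. If $c\subseteq\{p+1,\dots,p+q\}$, write $\alpha_t=p+\beta_t$; the factor is $\sum_{s_{\beta_1},\dots,s_{\beta_h}=1}^m\delta(s_{\beta_2},s_{\beta_1})\delta(s_{\beta_3},s_{\beta_2})\cdots\delta(s_{\beta_1},s_{\beta_h})$, the chained deltas force $s_{\beta_1}=\dots=s_{\beta_h}$, so the sum equals $m$, and these are exactly the cycles counted by $\kappa_1(\sigma)$. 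Finally, if $c$ meets both $\{1,\dots,p\}$ and $\{p+1,\dots,p+q\}$, I would read $c$ cyclically and decompose it into maximal runs of letters $\le p$ and maximal runs of letters $>p$, which alternate. For one run of letters $>p$, say $p+\beta_0,\dots,p+\beta_u$ sitting between the letter $\gamma$ just before and the letter $\gamma'$ just after (both $\le p$), the variables $s_{\beta_0},\dots,s_{\beta_u}$ enter precisely the chain $\delta(s_{\beta_0},k_\gamma)\,\delta(s_{\beta_1},s_{\beta_0})\cdots\delta(s_{\beta_u},s_{\beta_{u-1}})\,\delta(i_{\gamma'},s_{\beta_u})$, and summing over them telescopes to $\delta(i_{\gamma'},k_\gamma)$ — exactly the delta factor linking $\gamma$ to its successor $\gamma'$ in the cycle of $\pr_1(\sigma)$ obtained from $c$ by deleting the letters $>p$. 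The delta factors of $\delta_c$ coming from consecutive letters $\le p$ are left untouched and account for the remaining delta factors of that cycle; hence the whole single-cycle factor equals $\delta_{c'}(\bm{i},\bm{k})$, where $c'$ is the corresponding cycle of $\pr_1(\sigma)$, and no power of $m$ is produced.

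Assembling the three cases, the cycles of the first and third types supply exactly one delta factor per cycle of $\pr_1(\sigma)$ — whose product is $\delta_{\pr_1(\sigma)}(\bm{i},\bm{k})$ by \eqref{eq:delta-cycle-decomposition} applied to $\pr_1(\sigma)$ — while the $\kappa_1(\sigma)$ cycles of the second type each contribute a factor $m$, which yields \eqref{eq:key-lemma-1}. The hard part will be the bookkeeping in the third case: checking that the run decomposition is well defined and that distinct runs use pairwise disjoint $s$-variables (so the per-run sums are genuinely independent), and confirming the telescoping identity in the small sub-cases (a run of length one, or a block of letters $\le p$ consisting of a single letter $\gamma$, where the correct output is the fixed-point factor $\delta(i_\gamma,k_\gamma)$). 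None of this is deep, but it must be done carefully to be sure that nothing spurious survives the summation and that no stray factor of $m$ or $n$ appears.
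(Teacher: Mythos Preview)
Your proposal is correct and follows essentially the same route as the paper: reduce to the first identity by symmetry, factor $\delta_\sigma$ over cycles via \eqref{eq:delta-cycle-decomposition}, split the per-cycle sums into the same three cases, and reassemble. Your treatment of the mixed case (iii) via maximal alternating runs is slightly more systematic than the paper's, which argues by an illustrative example, but the underlying telescoping mechanism is identical.
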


\begin{proof}
By symmetry, it is enough to show the first equation
\eqref{eq:key-lemma-1}.
From \eqref{eq:delta-cycle-decomposition}, we obtain the decomposition
\begin{equation} \label{eq:sum-s-decomposition}
\sum_{s_{1}, \dots, s_{q}=1}^m
\delta_{\sigma} (\bm{i}\cup \bm{s}, \bm{k}\cup \bm{s}) 
= \prod_{c \in C(\sigma)} \left(
\sum_{(s_\beta): p+\beta \in c} 
\delta_{c} (\bm{i}\cup \bm{s}, \bm{k}\cup \bm{s}) 
 \right),
\end{equation}
where $C(\sigma)$ is the set of all cycles in $\sigma$,
and the sum on the right-hand side is over all $s_\beta$ such that 
the letter $p+\beta$ appears in $c$ and $\beta \in \{1,2,\dots,q\}$.
For example, for $p=q=5$ and $\sigma$ given in \eqref{eq:example_permutation}, we have
\begin{align*}
&\sum_{s_{1}, \dots, s_{5}=1}^m
\delta_{\sigma} (\bm{i}\cup \bm{s}, \bm{k}\cup \bm{s}) \\
&= \delta_{(1 \ 3)} (\bm{a},\bm{b})\cdot 
\delta_{(5)} (\bm{a},\bm{b})  \cdot
\sum_{s_2, s_1, s_4=1}^m \delta_{(2 \ 7 \ 6 \ 4 \ 9)} (\bm{a},\bm{b}) \cdot \sum_{s_3,s_5=1}^m \delta_{(8 \ 10)} (\bm{a},\bm{b}).
\end{align*}
(Here $s_2, s_1,s_4$ appear for the cycle $c=(2 \ 7 \ 6 \ 4 \ 9)$
because $p+2, p+1, p+4$ with $p=5$ appear in $c$.)

Now we compute $$\sum\limits_{(s_\beta): p+\beta \in c} 
\delta_{c} (\bm{i}\cup \bm{s}, \bm{k}\cup \bm{s})$$
for each cycle $c$ in $\sigma$.
There are three possible cases.
\begin{enumerate}
\item[(i)] Suppose that $c \subset \{1,2,\dots, p\}$.
Then there is no $\beta$ such that $p+\beta \in c$. 
We therefore have 
\[
\sum\limits_{(s_\beta): p+\beta \in c} 
\delta_{c} (\bm{i}\cup \bm{s}, \bm{k}\cup \bm{s}) 
= \delta_{c} (\bm{i}\cup \bm{s}, \bm{k}\cup \bm{s}) 
= \delta_c (\bm{i}, \bm{k}).
\]
\item[(ii)] Suppose that $c \subset \{p+1,p+2,\dots, p+q\}$
and write $$c=(p+\beta_1 \ \ p+\beta_2 \ \ \dots \ \  p+\beta_h).$$
If $h=1$ (i.e. $p+\beta_1$ is a fixed point of $\sigma)$, then 
\[
\sum\limits_{(s_\beta): p+\beta \in c} 
\delta_{c} (\bm{i}\cup \bm{s}, \bm{k}\cup \bm{s}) 
=\sum_{s_{\beta_1}=1}^m \delta_{(p+\beta_1)}  (\bm{i}\cup \bm{s}, \bm{k}\cup \bm{s}) 
= \sum_{s_{\beta_1}=1}^m  \delta(s_{\beta_1},s_{\beta_1})= m.
\]
If $h >1$, then
\begin{align*}
&\sum\limits_{(s_\beta): p+\beta \in c} 
\delta_{c} (\bm{i}\cup \bm{s}, \bm{k}\cup \bm{s}) 
=\sum_{s_{\beta_1},s_{\beta_2}  \dots, s_{\beta_h}=1}^m
\delta_{c}  (\bm{i}\cup \bm{s}, \bm{k}\cup \bm{s}) \\
&= \sum_{s_{\beta_1},s_{\beta_2}  \dots, s_{\beta_h}=1}^m
\delta(s_{\beta_2}, s_{\beta_1}) 
\cdots \delta(s_{\beta_{h}}, s_{\beta_{h-1}}) \, 
\delta(s_{\beta_1}, s_{\beta_h})
=m.
\end{align*}
\item [(iii)] Suppose that $c \cap \{1,2,\dots,p\} \not=\emptyset$ and also $c \cap \{p+1,p+2,\dots, p+q\} \not= \emptyset$.
Then
\begin{equation} \label{eq:case-iii}
\sum\limits_{(s_\beta): p+\beta \in c} 
\delta_{c} (\bm{i}\cup \bm{s}, \bm{k}\cup \bm{s}) 
= \delta_{\pr_1(c)} (\bm{i}, \bm{k}),
\end{equation}
where we extend the definition of $\pr_1:\mathfrak{S}_{p+q} \to \mathfrak{S}_p$ to cycles on $\{1,2,\dots,p+q\}$ in the natural way.
For any cycle $c$ in $\sigma$, 
the image $\pr_1(c)$ is a cycle of $\pr_1(\sigma)$.

An illustrative example serves to show why (\ref{eq:case-iii}) must hold in this case.
Let $p=q=5$ and $c=(2 \ 7 \ 6 \ 4 \ 9)=(2 \ \  p+2 \ \ p+1  \ \ 4 \ \ p+4)$. Then,
by the definition of $\delta_{c} (\bm{i}\cup \bm{s}, \bm{k}\cup \bm{s})$, we can see that
\begin{align*}
&\sum\limits_{(s_\beta): p+\beta \in c} 
\delta_{c} (\bm{i}\cup \bm{s}, \bm{k}\cup \bm{s}) \\
&= \sum_{s_2, s_1,s_4=1}^m 
\delta(s_2, k_2) \, \delta(s_1,s_2) \, \delta(i_4, s_1) \,
\delta(s_4,k_4) \, \delta(i_2, s_4).
\end{align*}
Only one term with $k_2=s_2=s_1=i_4$ and $k_4=s_4=i_2$ survives.
This survivor corresponds to the elimination 
\[
\Big( \ \underbrace{2}_{i_2/ k_2} \ \  \underbrace{\cancel{p+2}}_{s_2} \ \ \underbrace{\cancel{p+1}}_{s_1}  \ \ \underbrace{4}_{i_4/k_4} \ \ \underbrace{\cancel{p+4}}_{s_4} \ \Big)
\] 
for $c$.
Hence
the above summation equals $\delta(i_4,k_2) \delta(i_2, k_4) =\delta_{(2 \ 4)}(\bm{i},\bm{k})$.
\end{enumerate}

Returning to \eqref{eq:sum-s-decomposition} and using the formulae for the three cases above, we have
\begin{align*}
& \prod_{c \in C(\sigma)} \left(
\sum_{(s_\beta): p+\beta \in c} 
\delta_{c} (\bm{i}\cup \bm{s}, \bm{k}\cup \bm{s}) 
 \right) \\
&= \prod_{c: \, \text{case (i)}}  \delta_c (\bm{i}, \bm{k})
\times \prod_{c: \, \text{case (iii)}}  \delta_{\pr_1(c)} (\bm{i}, \bm{k}) \times  \prod_{c: \, \text{case (ii)}} m.
\end{align*}
The number of cycles in case (ii) is $\kappa_1(\sigma)$ by definition.
It is easy to check that 
\[
\pr_1(\sigma)= 
\prod_{c: \, \text{case (i)}} c \cdot \prod_{c: \, \text{case (iii)}}
\pr_1(c).
\]
Hence we have completed the proof of \eqref{eq:key-lemma-1}.
\end{proof}

\subsection{More than two distinguishable particles}

An immediate generalization of Theorem \ref{thm:projection_distribution} gives a formula for the mixed moments of entries of single-particle marginals for a system of $k$ distinguishable particles.  Here we choose integers $n_1, \hdots, n_k \ge 2$ representing the dimensions of the Hilbert spaces for each of the individual particles, and we set $N = \prod_{j=1}^k n_j$.
Identifying
\begin{equation} \label{eq:CN-tensor}
\bC^N \cong \bigotimes_{j = 1}^k \bC^{n_j},
\end{equation}
 we label the coordinates of $N$-dimensional vectors by $k$-component multi-indices $({i_1, \hdots, i_k})$, where each $i_j$ runs from 1 to $n_j$ and the multi-indices are ordered lexicographically.
Fix $\lambda =(\lambda_{i_1 \hdots i_k})_{i_1 \hdots i_k=1 \hdots 1}^{n_1 \hdots n_k} \in \mathbb{R}^{N}$,
and let $U$ be a Haar-distributed $N$-by-$N$ unitary matrix.
We consider a random Hermitian matrix given by 
\begin{equation}
H=(H_{i_1 \hdots i_k, j_1 \hdots j_k}) = U \diag (\lambda) U^\dagger.
\end{equation}

For $l = 1, \hdots, k$, define the $n_l \times n_l$ Hermitian matrix $\pi_l(H) =(\pi_l(H)_{i_l,j_l})$ by
\begin{equation} \label{eq:proj_multi_1}
\pi_l(H)_{i_l,j_l}= \sum_{\substack{1 \le s \le k \\ s \ne l}} \sum_{i_s=1}^{n_s} H_{i_1 \hdots i_l \hdots i_k, i_1 \hdots j_l \hdots i_k},
\end{equation}
that is, $\pi_l(H)$ is obtained by taking the partial trace of $H$ over all legs of the tensor product (\ref{eq:CN-tensor}) {\it except} $\bC^{n_l}$.  Fix nonnegative integers $p_1, \hdots, p_k$.  For $\sigma \in \mathfrak{S}_{p_1+ \cdots + p_k}$, we make the following definitions, which are analogous to those in \S \ref{subsec:sym_gp}:
\begin{itemize}
\item $\kappa_l(\sigma)$ is the number of cycles $c$ in $\sigma$ satisfying $$c \cap \bigg \{1 + \sum_{j =1}^{l-1} p_{j}, \, \hdots, \, \sum_{j =1}^{l} p_{j} \bigg \} = \emptyset.$$
\item $\pr_l(\sigma) \in \mathfrak{S}_{p_l}$ is the permutation on $p_l$ elements obtained by decomposing $\sigma$ as a product of cycles, erasing all elements except for $i + \sum_{j =1}^{l-1} p_{j}$ for $i = 1, \hdots, p_l$, and then replacing each of the remaining elements $i + \sum_{j =1}^{l-1} p_{j}$ with $i$.
\end{itemize}
An identical argument to the proof of Theorem  \ref{thm:projection_distribution} yields the following formula for the mixed moment 
\begin{equation}
\mathbb{E} \left[ 
\prod_{l = 1}^k \prod_{\alpha=1}^{p_l} \pi_l(H)_{i^{(l)}_\alpha, j^{(l)}_\alpha}
\right]
\end{equation}
for $i^{(l)}_1,\dots,i^{(l)}_{p_l} , j^{(l)}_1,\dots,j^{(l)}_{p_l} \in \{1,2,\dots,n_l\}$:

\begin{thm} \label{thm:projection_multi_1}
Fix $k \ge 2$ and let $p_1, \hdots, p_k$ be nonnegative integers. For $l = 1, \hdots, k$, fix sequences of indices
\[
    \bm{i}^{(l)}=(i^{(l)}_1,\dots,i^{(l)}_{p_l}), \ \bm{j}^{(l)}=(j^{(l)}_1,\dots,j^{(l)}_{p_l})  \quad \in \{1,2,\dots,n_l\}^{\times p_l}.
\]
We have the formula
\begin{multline}
\mathbb{E} \left[ 
\prod_{l = 1}^k \prod_{\alpha=1}^{p_l} \pi_l(H)_{i^{(l)}_\alpha, j^{(l)}_\alpha}
\right] \\
= \sum_{\sigma ,\tau \in \mathfrak{S}_{p_1 + \cdots + p_k}} 
\Wg_{N}(\sigma^{-1}\tau) \Tr_\tau(\lambda)
 \prod_{l = 1}^k \delta_{\pr_l (\sigma)} (\bm{i}^{(l)},\bm{j}^{(l)}) \, n_l^{\kappa_l(\sigma)}.
\end{multline}
\end{thm}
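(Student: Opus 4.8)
The plan is to repeat the proof of Theorem~\ref{thm:projection_distribution} almost verbatim; the one genuinely new ingredient is the bookkeeping needed to handle $k$ tensor legs at once, and this turns out to decouple leg by leg into the two-particle computation already performed. Throughout, set $d = p_1 + \cdots + p_k$ and order the $d$ copies of $H$ so that the factors coming from leg $l$ occupy the block $B_l = \{1 + \sum_{j<l} p_j,\, \dots,\, \sum_{j\le l} p_j\}$. First I would expand the left-hand side by linearity of expectation and the definition~\eqref{eq:proj_multi_1}: each factor $\pi_l(H)_{i^{(l)}_\alpha, j^{(l)}_\alpha}$ is a sum over auxiliary indices $t^{(l,\alpha)}_s \in \{1,\dots,n_s\}$, one for every leg $s \ne l$, of an entry of $H$ whose row and column multi-indices both equal $t^{(l,\alpha)}_s$ on leg $s$ (for each $s \ne l$) and equal $i^{(l)}_\alpha$, resp.\ $j^{(l)}_\alpha$, on leg $l$. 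This writes the target mixed moment as $\sum_{\{t^{(l,\alpha)}_s\}} \mathbb{E}\big[\prod_{r=1}^d H_{a_r, b_r}\big]$, where $\bm{a}$ and $\bm{b}$ are the resulting length-$d$ sequences of row and column multi-indices.

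Next I would apply Lemma~\ref{lem:CMS} with $N = \prod_{j=1}^k n_j$ to every term, producing $\sum_{\sigma,\tau \in \mathfrak{S}_d} \delta_\sigma(\bm{a}, \bm{b})\, \Wg_N(\sigma^{-1}\tau)\, \Tr_\tau(\lambda)$. Since two multi-indices agree iff they agree on every leg, $\delta_\sigma(\bm{a}, \bm{b}) = \prod_{s=1}^k \delta_\sigma(\bm{a}^{(s)}, \bm{b}^{(s)})$, where $\bm{a}^{(s)}, \bm{b}^{(s)}$ collect the leg-$s$ components. For fixed $s$, the sequence $\bm{a}^{(s)}$ (and likewise $\bm{b}^{(s)}$) carries the visible indices $\bm{i}^{(s)}$ (resp.\ $\bm{j}^{(s)}$) on $B_s$ and the auxiliary indices $t^{(l,\alpha)}_s$ --- shared between $\bm{a}^{(s)}$ and $\bm{b}^{(s)}$ --- on every one of the $d - p_s$ positions outside $B_s$, and the auxiliary indices feeding $\delta_\sigma(\bm{a}^{(s)}, \bm{b}^{(s)})$ are disjoint from those feeding $\delta_\sigma(\bm{a}^{(s')}, \bm{b}^{(s')})$ for $s' \ne s$. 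Interchanging the order of summation, the sum over all auxiliary indices therefore factors as $\prod_{s=1}^k \big(\sum_{\{t^{(l,\alpha)}_s : l \ne s\}} \delta_\sigma(\bm{a}^{(s)}, \bm{b}^{(s)})\big)$, while $\Wg_N(\sigma^{-1}\tau)$ and $\Tr_\tau(\lambda)$ are untouched.

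For each fixed $s$, the inner sum is exactly the object computed in the key lemma of the proof of Theorem~\ref{thm:projection_distribution} (equation~\eqref{eq:key-lemma-1}), now with $B_s$ playing the role of the distinguished block $\{1,\dots,p\}$ and the complement of $B_s$ that of $\{p+1,\dots,p+q\}$; note that the latter proof classifies cycles of $\sigma$ only by whether they lie inside, outside, or across the distinguished block, and never uses that the complement is a contiguous block. Running that argument gives
\[
\sum_{\{t^{(l,\alpha)}_s \,:\, l \ne s\}} \delta_\sigma(\bm{a}^{(s)}, \bm{b}^{(s)}) = \delta_{\pr_s(\sigma)}(\bm{i}^{(s)}, \bm{j}^{(s)})\, n_s^{\kappa_s(\sigma)},
\]
since a cycle of $\sigma$ contained in $B_s$ contributes $\delta_c(\bm{i}^{(s)}, \bm{j}^{(s)})$, a cycle straddling $B_s$ contributes $\delta_{\pr_s(c)}(\bm{i}^{(s)}, \bm{j}^{(s)})$, and a cycle disjoint from $B_s$ forces all of its auxiliary indices equal and contributes a factor $n_s$ --- the number of cycles of the last type being $\kappa_s(\sigma)$, and the surviving cycles assembling into $\pr_s(\sigma)$. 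Substituting this for each $s$ and reassembling gives the formula stated in Theorem~\ref{thm:projection_multi_1}.

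The one step that needs care is the decoupling in the second paragraph: I would have to check that, after fixing a leg $s$, the leg-$s$ components of the $H$-factors behave precisely like the single family of summation variables in the $k=2$ argument --- one variable attached to each of the $d - p_s$ positions outside $B_s$, shared between the row and column index --- so that the two-particle lemma transfers without change. Beyond this combinatorial bookkeeping there is no new difficulty, and in particular the Weingarten-calculus input is identical to the two-particle case.
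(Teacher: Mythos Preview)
Your proposal is correct and is precisely the argument the paper has in mind: the paper does not write out a separate proof of Theorem~\ref{thm:projection_multi_1} but simply states that ``an identical argument to the proof of Theorem~\ref{thm:projection_distribution}'' gives the result. Your expansion via~\eqref{eq:proj_multi_1}, application of Lemma~\ref{lem:CMS}, leg-by-leg factorisation of $\delta_\sigma$, and reduction of each leg's sum to the key lemma~\eqref{eq:key-lemma-1} (with $B_s$ as the distinguished block) is exactly the intended argument, and your remark that the lemma's proof only classifies cycles by their intersection with the distinguished block --- not by contiguity of its complement --- is the one point that deserves checking and is handled correctly.
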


\section{Bosons and fermions} \label{sec:bos-fer}

\subsection{Bosons}

\textit{Indistinguishable bosons} are particles whose joint state is left invariant when any two of the particles are exchanged.  Accordingly, we consider a ``state'' of a system of $k$ indistinguishable bosons to be a Hermitian operator on $\mathrm{Sym}^k \bC^n$, where $\bC^n$ represents the Hilbert space associated with an individual boson.  By a standard construction, described in more detail in \cite[\textsection2.3]{CM}, we model $\mathrm{Sym}^k \bC^n$ as the space of homogenerous polynomials of degree $k$ in $n$ variables $(x_1, \hdots, x_n)$.  We make this space into a Hilbert space by choosing as an orthonormal basis the normalized monomials
\[
    v_\alpha = \frac{1}{\sqrt{\alpha!}} \prod_{i=1}^n x_i^{\alpha_i},
\]
as $\alpha$ runs over $n$-component multi-indices with $|\alpha| = \sum_i \alpha_i = k$, and the multi-index factorial is defined by $\alpha! = \prod_i \alpha_i !$.  As there are $N = {{n + k -1} \choose {k}}$ such multi-indices, this allows us to identify $\mathrm{Sym}^k \bC^n \cong \bC^N$.  The symmetric group $\mathfrak{S}_k$ acts on the tensor product space $(\bC^n)^{\otimes k} \cong \bC^{n^k}$ by permuting the legs of the tensor product, and there is an isometric embedding $S : \mathrm{Sym}^k \bC^n \hookrightarrow (\bC^n)^{\otimes k}$ given by
\begin{equation} \label{eq:sym_embed}
 S : v_\alpha \mapsto \frac{1}{\sqrt{k! \, \alpha!}} \sum_{\sigma \in \mathfrak{S}_k} \sigma( e_1^{\otimes \alpha_{1}} \otimes \cdots \otimes e_n^{\otimes \alpha_n} ),
 \end{equation}
where $e_1, \hdots, e_n$ are the standard basis vectors of $\bC^n$, and we use the convention $e_i \otimes e_j^{\otimes 0} = e_i$.  To verfify that (\ref{eq:sym_embed}) is an isometry, note that the stabilizer of $e_1^{\otimes \alpha_{1}} \otimes \cdots \otimes e_n^{\otimes \alpha_n}$ in $\mathfrak{S}_k$ has order $\alpha!$.  Therefore the sum in (\ref{eq:sym_embed}) includes $k!/\alpha!$ distinct pairwise orthogonal vectors of length $\alpha!$, so that its magnitude is equal to $\sqrt{(k! / \alpha!) \alpha!^2 } = \sqrt{k! \, \alpha!}.$ Thus $S v_\alpha$ is indeed a unit vector in $(\bC^n)^{\otimes k}$.  It is obvious that if $\alpha \ne \beta$ then $S v_\alpha$ and $S v_\beta$ are orthogonal.

Using (\ref{eq:sym_embed}), we can write down the matrix elements of $S$ with respect to the basis $\{ v_\alpha \}$ of $\mathrm{Sym}^k \bC^n$ and the standard basis
$$ e_{\otimes \bm{i}} = e_{i_1} \otimes \cdots \otimes e_{i_k}, \qquad \bm{i} = (i_1, \hdots, i_k) \in \{ 1, \hdots, n \}^{\times k} $$
of $(\bC^n)^{\otimes k}$.  For $\bm{i} \in \{ 1, \hdots, n \}^{\times k}$, write $\mathrm{tab}(\bm{i})$ for the $n$-component multi-index such that, for each $j \in \{1, \hdots, n\}$, the $j$th component $\mathrm{tab}(\bm{i})_j$ is equal to the multiplicity with which $j$ occurs in $\bm{i}$.  Then the entries of $S$, regarded as an $n^k$-by-$N$ matrix, are
\begin{equation} \label{eq:S-entries}
S_{\bm{i},\alpha} =  \sqrt{\frac{\alpha!}{k!}} \, \delta(\mathrm{tab}(\bm{i}), \alpha),
\end{equation}
where $\delta(\mathrm{tab}(\bm{i}), \alpha) = 1$ if $\mathrm{tab}(\bm{i}) = \alpha$ and $0$ otherwise.  This matrix has an $N$-by-$n^k$ pseudoinverse $(\bC^n)^{\otimes k} \to \mathrm{Sym}^k \bC^n$, which is obtained by first projecting onto the span of the vectors $S v_\alpha$, and then mapping $S v_\alpha \mapsto v_\alpha$.  In fact, since $S$ is an isometry with real entries, this pseudoinverse is just the transpose $S^T$; it is easily verified from (\ref{eq:sym_embed}) and (\ref{eq:S-entries}) that $S^T S v_\alpha = v_\alpha$.

The map $H \mapsto S H S^T$ embeds the space of $N$-by-$N$ Hermitian matrices into the space of $n^k$-by-$n^k$ Hermitian matrices.  Our main objects of study in this section are the single-particle marginals of the state $H$, which are the $n$-by-$n$ matrices $\pi_i(SHS^T)$, $1 \le i \le n$, where $\pi_i$ is the $i$th marginal of a Hermitian operator on $(\bC^n)^{\otimes k}$ as previously defined in (\ref{eq:proj_multi_1}).  However, since $\mathrm{span}\{S v_\alpha \} \cong \mathrm{Sym}^k \bC^n$ is invariant under the action of $\mathfrak{S}_k$, we have $\pi_i(SHS^T) = \pi_j(SHS^T)$ for all $1 \le i, j \le k$.  Without loss of generality, we therefore restrict our attention to $\pi(H) = \pi_1(SHS^T)$.

We pose the following problem.  Let $\lambda = (\lambda_\alpha)_{|\alpha| = k} \in \bR^N$, and let $U$ be a Haar-distributed random $N$-by-$N$ unitary matrix.  Set $H = U \mathrm{diag}(\lambda) U^\dagger$.  Fix a nonnegative integer $p$ and two sequences of indices $(i_1, \hdots, i_p),$ $(j_1, \hdots, j_p) \in \{1, \hdots, n\}^{\times p}$.  We want to derive a formula for the mixed moment
$$ \mathbb{E} \left[ \prod_{k = 1}^p \pi(H)_{i_k, j_k} \right]. $$

In the next lemma we write the entries of $\pi(H)$ in terms of the entries of $H$, which will enable us to solve this problem using Weingarten calculus.  First we introduce some further notation.  Given two $n$-component multi-indices $\alpha, \beta$, we define their sum $\alpha + \beta$ componentwise, as though $\alpha$ and $\beta$ were vectors in $\bR^n$: $(\alpha + \beta)_j = \alpha_j + \beta_j$.  We write $e_j$ for the multi-index with a $1$ in the $j$th position and all other entries equal to $0$.  Thus $(\alpha + e_j)_j = \alpha_j + 1$, and $(\alpha + e_j)_i = \alpha_i$ for $i \ne j$.

\begin{lem} \label{lem:proj_entries_bos}
\begin{equation} \label{eq:proj_entries_bos}
    \pi(H)_{ij} = \frac{1}{k} \sum_{|\gamma| = k-1} \sqrt{(\gamma_i + 1)(\gamma_j + 1)} \, H_{e_i + \gamma, e_j + \gamma},
\end{equation}
where the sum runs over $n$-component multi-indices $\gamma$ satisfying $|\gamma| = k-1$.
\end{lem}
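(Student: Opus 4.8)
The plan is to expand $\pi(H) = \pi_1(SHS^T)$ straight from the definitions and then reorganize the resulting sum over tensor‑leg index sequences into a sum over multi‑indices. First I would compute the entries of $\widetilde H := SHS^T$, which is an $n^k$‑by‑$n^k$ Hermitian matrix. Since $(S^T)_{\beta,\bm k} = S_{\bm k,\beta}$ and, by \eqref{eq:S-entries}, $S_{\bm i,\alpha} = \sqrt{\alpha!/k!}\,\delta(\mathrm{tab}(\bm i),\alpha)$, the Kronecker deltas collapse the double sum $\sum_{\alpha,\beta} S_{\bm i,\alpha} H_{\alpha,\beta} S_{\bm k,\beta}$ to a single term:
\begin{equation*}
\widetilde H_{\bm i,\bm k} = \frac{1}{k!}\sqrt{\mathrm{tab}(\bm i)! \, \mathrm{tab}(\bm k)!}\; H_{\mathrm{tab}(\bm i),\,\mathrm{tab}(\bm k)}.
\end{equation*}

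Next I would apply the definition \eqref{eq:proj_multi_1} of the first marginal, which gives $\pi(H)_{ij} = \pi_1(\widetilde H)_{ij} = \sum_{i_2,\dots,i_k=1}^n \widetilde H_{(i,i_2,\dots,i_k),\,(j,i_2,\dots,i_k)}$. The key observation is that, writing $\gamma := \mathrm{tab}(i_2,\dots,i_k)$ (an $n$‑component multi‑index with $|\gamma| = k-1$), we have $\mathrm{tab}(i,i_2,\dots,i_k) = e_i + \gamma$ and $\mathrm{tab}(j,i_2,\dots,i_k) = e_j + \gamma$, so the summand depends on $(i_2,\dots,i_k)$ only through $\gamma$. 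I would therefore group the sum by the value of $\gamma$: the number of sequences $(i_2,\dots,i_k)$ with $\mathrm{tab}(i_2,\dots,i_k) = \gamma$ is the multinomial coefficient $(k-1)!/\gamma!$, and using the elementary identities $(e_i + \gamma)! = (\gamma_i+1)\,\gamma!$ and $(e_j+\gamma)! = (\gamma_j+1)\,\gamma!$ each group contributes
\begin{equation*}
\frac{(k-1)!}{\gamma!}\cdot\frac{1}{k!}\cdot\gamma!\sqrt{(\gamma_i+1)(\gamma_j+1)}\; H_{e_i+\gamma,\,e_j+\gamma} = \frac{1}{k}\sqrt{(\gamma_i+1)(\gamma_j+1)}\; H_{e_i+\gamma,\,e_j+\gamma}.
\end{equation*}
Summing over all multi‑indices $\gamma$ with $|\gamma| = k-1$ then yields \eqref{eq:proj_entries_bos}.

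This argument is essentially bookkeeping, and I do not expect a genuine obstacle; the one point that needs care is the passage from the sum over index sequences $(i_2,\dots,i_k)$ to the sum over multi‑indices $\gamma$, where one must verify both that the summand is constant on each fiber of $\mathrm{tab}$ and that the fiber has size exactly $(k-1)!/\gamma!$. The factorial identity $(e_i+\gamma)! = (\gamma_i+1)\gamma!$ is what makes the prefactor $\frac{1}{k!}\cdot\frac{(k-1)!}{\gamma!}\cdot\gamma!$ collapse cleanly to $\frac1k$.
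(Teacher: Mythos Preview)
Your proposal is correct and follows essentially the same approach as the paper: compute $(SHS^T)_{\bm i,\bm k}$ via the explicit entries of $S$, expand $\pi_1(SHS^T)_{ij}$ as a sum over the traced indices, group that sum by $\gamma = \mathrm{tab}(i_2,\dots,i_k)$ using the fiber count $(k-1)!/\gamma!$, and simplify with $(e_i+\gamma)! = (\gamma_i+1)\gamma!$. The paper's proof differs only in notation.
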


\begin{proof}
As a first step, we compute the entries of $S H S^T$.  For $\bm{i}, \bm{j} \in \{1, \hdots, n\}^{\times k}$, (\ref{eq:S-entries}) gives
\[
    (S H S^T)_{\bm{i}, \bm{j}}
        = \sum_{|\alpha| = k} S_{\bm{i}, \alpha} \sum_{|\beta| = k} H_{\alpha, \beta} S^T_{\beta, \bm{j}} 
        = \frac{\sqrt{ \mathrm{tab}(\bm{i})! \, \mathrm{tab}(\bm{j})!} }{k!} H_{\mathrm{tab}(\bm{i}), \mathrm{tab}(\bm{j})}.
\]
Then from the definition (\ref{eq:proj_multi_1}), we have
$$ \pi(H)_{ij} = \pi_1(S H S^T)_{ij} = \sum_{l_1, \hdots, l_{k-1} = 1}^n (SHS^T)_{i l_1 \hdots l_{k-1}, \, j l_1 \hdots l_{k-1}} $$
$$ = \sum_{l_1, \hdots, l_{k-1} = 1}^n \frac{\sqrt{ \mathrm{tab}(i, l_1, \hdots, l_{k-1})! \, \mathrm{tab}(j, l_1, \hdots, l_{k-1})!} }{k!} H_{\mathrm{tab}(i, l_1, \hdots, l_{k-1}), \mathrm{tab}(j, l_1, \hdots, l_{k-1})}.$$
Clearly we can write $\mathrm{tab}(i, l_1, \hdots, l_{k-1}) = e_i + \gamma$, $\mathrm{tab}(j, l_1, \hdots, l_{k-1}) = e_j + \gamma$ for a unique multi-index $\gamma$ with $|\gamma| = k-1$.  Namely, $\gamma = \mathrm{tab}(l_1, \hdots, l_{k-1}).$  Each such $\gamma$ appears in the sum above with multiplicity equal to the number of distinct permutations of any given $(k-1)$-tuple $(l_1, \hdots, l_{k-1})$ with $\mathrm{tab}(l_1, \hdots, l_{k-1}) = \gamma$.  The stabilizer of such a $(k-1)$-tuple in $\mathfrak{S}_{k-1}$ has order $\gamma!$, so there are $(k-1)! / \gamma!$ distinct permutations. Thus we find
$$ \pi(H)_{ij} = \sum_{|\gamma| = k-1} \frac{(k-1)!}{\gamma!} \frac{\sqrt{(e_i + \gamma)! \, (e_j + \gamma)!}}{k!}  \, H_{e_i + \gamma, e_j + \gamma}, $$
from which (\ref{eq:proj_entries_bos}) follows by the observation that $(e_i + \gamma)! = \gamma! (\gamma_i + 1)$.
\end{proof}

Combining Lemmas \ref{lem:CMS} and \ref{lem:proj_entries_bos}, we have shown the desired formula:
\begin{thm} \label{thm:bos-moments}
The mixed moments of the entries of a single-particle marginal for a system of $k$ indistinguishable bosons are given by
\begin{equation} \label{eq:boson_proto}
\mathbb{E} \left[ 
\prod_{s=1}^p \pi (H)_{i_s, j_s} 
\right]
= \sum_{\sigma, \tau \in \mathfrak{S}_p} \Wg_N( \sigma^{-1}\tau) \Tr_\tau(\lambda)
\Delta_{\sigma}^{n,k}(\bm{i}, \bm{j})
\end{equation} 
with
\begin{multline} \label{eqn:Delta-bos}
\Delta_{\sigma}^{n,k}(\bm{i}, \bm{j})
=  \frac{1}{k^p} \sum_{\gamma^{(1)}, \dots, \gamma^{(p)}}\prod_{s=1}^p \sqrt{(\gamma_{i_s}^{(s)}+1)(\gamma_{j_s}^{(s)}+1)} \\
\times \delta_{\sigma} \big( (e_{i_1} +\gamma^{(1)}, \dots, e_{i_p}+\gamma^{(p)}),
(e_{j_1} +\gamma^{(1)}, \dots, e_{j_p}+\gamma^{(p)})
\big),
\end{multline}
where each $\gamma^{(s)}$ runs over $n$-component multi-indices such that 
the sum of their components is $k-1$.
\end{thm}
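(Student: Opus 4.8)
The plan is to combine Lemmas \ref{lem:CMS} and \ref{lem:proj_entries_bos} directly, with linearity of expectation and a reindexing of finite sums as the only extra ingredients. First I would use Lemma \ref{lem:proj_entries_bos} to rewrite each factor in the product $\prod_{s=1}^p \pi(H)_{i_s,j_s}$ as a sum over $n$-component multi-indices $\gamma^{(s)}$ with $|\gamma^{(s)}| = k-1$, obtaining
\[
\prod_{s=1}^p \pi(H)_{i_s,j_s} = \frac{1}{k^p}\sum_{\gamma^{(1)},\dots,\gamma^{(p)}}\left(\prod_{s=1}^p\sqrt{(\gamma^{(s)}_{i_s}+1)(\gamma^{(s)}_{j_s}+1)}\right)\prod_{s=1}^p H_{e_{i_s}+\gamma^{(s)},\,e_{j_s}+\gamma^{(s)}}.
\]
Taking expectations and using linearity, the outer sum over the tuples $(\gamma^{(1)},\dots,\gamma^{(p)})$ and the square-root weights pull outside the expectation, leaving $\mathbb{E}\big[\prod_{s=1}^p H_{e_{i_s}+\gamma^{(s)},\,e_{j_s}+\gamma^{(s)}}\big]$ to be evaluated.

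Next I would apply Lemma \ref{lem:CMS}. Here $H = U\diag(\lambda)U^\dagger$ where $\diag(\lambda)$ is a real Hermitian matrix of size $N = \binom{n+k-1}{k}$, using the identification $\mathrm{Sym}^k\bC^n \cong \bC^N$ furnished by the orthonormal basis $\{v_\alpha\}$: the degree-$k$ multi-indices $\alpha$ labelling the rows and columns of $H$ are in bijection with $\{1,\dots,N\}$, so Lemma \ref{lem:CMS} applies verbatim once $\delta(\cdot,\cdot)$ is read as equality of multi-indices. With row sequence $\bm{a}=(e_{i_1}+\gamma^{(1)},\dots,e_{i_p}+\gamma^{(p)})$ and column sequence $\bm{b}=(e_{j_1}+\gamma^{(1)},\dots,e_{j_p}+\gamma^{(p)})$, the lemma gives
\[
\mathbb{E}\Big[\prod_{s=1}^p H_{e_{i_s}+\gamma^{(s)},\,e_{j_s}+\gamma^{(s)}}\Big] = \sum_{\sigma,\tau\in\mathfrak{S}_p}\delta_\sigma(\bm{a},\bm{b})\,\Wg_N(\sigma^{-1}\tau)\,\Tr_\tau(\diag(\lambda)),
\]
and $\Tr_\tau(\diag(\lambda)) = \Tr_\tau(\lambda)$ by the convention adopted in the statement.

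Finally I would interchange the order of summation, moving the sum over $\sigma,\tau\in\mathfrak{S}_p$ to the outermost position and collecting the remaining sum over $(\gamma^{(1)},\dots,\gamma^{(p)})$, together with the prefactor $\tfrac{1}{k^p}\prod_s\sqrt{(\gamma^{(s)}_{i_s}+1)(\gamma^{(s)}_{j_s}+1)}$ and the factor $\delta_\sigma(\bm{a},\bm{b})$, into precisely the quantity $\Delta^{n,k}_\sigma(\bm{i},\bm{j})$ of \eqref{eqn:Delta-bos}. This is exactly \eqref{eq:boson_proto}.

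Since Lemmas \ref{lem:CMS} and \ref{lem:proj_entries_bos} are already available, there is no deep obstacle; the points requiring care are (i) the bookkeeping that consistently matches each $\gamma^{(s)}$ to the $s$th tensor factor across both the row and column index sequences, so that the argument of $\delta_\sigma$ is exactly the pair appearing in \eqref{eqn:Delta-bos}, and (ii) the harmless but necessary identification of the $N$ degree-$k$ multi-indices with $\{1,\dots,N\}$ so that Lemma \ref{lem:CMS} may be invoked unchanged. Everything else is linearity of expectation and a legitimate reordering of finite sums.
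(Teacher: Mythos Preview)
Your proposal is correct and follows exactly the paper's own approach: the paper simply states that the theorem follows by ``combining Lemmas \ref{lem:CMS} and \ref{lem:proj_entries_bos},'' and what you have written is precisely the routine unpacking of that combination. The two care-points you flag (consistent bookkeeping of the $\gamma^{(s)}$ and the identification of degree-$k$ multi-indices with $\{1,\dots,N\}$) are the only things to check, and both are handled correctly.
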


\subsubsection{Bosons: A simple example}

Consider the case $p=1$. 
Then, for the unique permutation $\mathrm{id}_1 \in \mathfrak{S}_1$,
\[
\Delta_{\mathrm{id}_1}^{n,k} (i,j)= \frac{1}{k} \sum_{|\gamma|=k-1} \sqrt{(\gamma_{i}+1)(\gamma_j+1)} 
\, \delta_{\mathrm{id}_1} (e_i+\gamma, e_j+\gamma).
\]
It is easy to see that this equals
\[
\Delta_{\mathrm{id}_1}^{n,k} (i,j)= \frac{\delta(i, j)}{k}\sum_{\gamma_1+ \cdots+\gamma_n=k-1} (\gamma_{i}+1).
\]
By symmetry, we have
\begin{align*}
\sum_{\gamma_1+ \cdots+\gamma_n=k-1} (\gamma_{i}+1)
&=\sum_{\gamma_1+ \cdots+\gamma_n=k-1} (\gamma_{1}+1) \\ 
&= \sum_{a=0}^{k-1} (a+1) \sum_{\gamma_2+ \cdots+ \gamma_{n}= k-1-a} 1 \\
&= \sum_{a=0}^{k-1} (a+1)  \binom{n-1+(k-1-a)-1}{k-1-a}.
\end{align*}
One can check (e.g.~using Mathematica) that this equals $\binom{n+k-1}{k-1}$.
Therefore we have obtained 
\begin{equation}
\Delta_{\mathrm{id}_1}^{n,k} (i,j)= \delta(i,j) \frac{1}{k} \binom{n+k-1}{k-1}
\end{equation}
for $p=1$.

From \eqref{eq:boson_proto}, we now have
\[
\mathbb{E} \left[\pi(H)_{i,j} \right]= \Wg_N(\mathrm{id}_1) \Tr_{\mathrm{id}_1}(\lambda) 
\, \Delta_{\mathrm{id}_1}^{n,k} (i,j) 
= \frac{1}{N} \Tr(\lambda)  \, \delta(i,j) \frac{1}{k} \binom{n+k-1}{k-1}.
\]
With $N=\binom{n+k-1}{k}$, it is immediate to see that
\begin{equation}
\mathbb{E} \left[\pi(H)_{i,j} \right] = \Tr(\lambda) \frac{\delta(i,j)}{n},
\end{equation}
exactly as one would expect.

\subsection{Fermions}
\textit{Indistinguishable fermions} are particles whose joint state is \textit{anti}-invariant (i.e., changes sign) when any two of the particles are exchanged.  Accordingly, we consider a ``state'' of a system of $k$ indistinguishable fermions to be a Hermitian operator on $\wedge^k \bC^n$, where $\bC^n$ represents the Hilbert space associated with an individual fermion. As explained in \cite[\textsection2.4]{CM}, we assume that $1 < k < n-1$, as otherwise the quantum marginal problem for fermions is trivial.  Define
$$ \mathcal{A}_k = \big \{ (a_1, \hdots, a_k) \in \{1, \hdots, n\}^{\times k} \ \big | \ a_1 < \hdots < a_k \big \}.$$
We make $\wedge^k \bC^n$ into a Hilbert space by choosing as an orthonormal basis the ${{n} \choose {k}}$ $k$-vectors
$$ e_{\wedge \bm{a}} = e_{a_1} \wedge \cdots \wedge e_{a_k}, \qquad \bm{a} = (a_1, \hdots, a_k) \in \mathcal{A}_k. $$
Then we have an isometric embedding $A : \wedge^k \bC^n \hookrightarrow ( \bC^n )^{\otimes k}$ defined by
$$ A e_{\wedge \bm{a}} = \frac{1}{\sqrt{k!}} \sum_{\sigma \in \mathfrak{S}_k} \sgn(\sigma) \, \sigma( e_{a_1} \otimes \cdots \otimes e_{a_k} ), $$
where $\mathfrak{S}_k$ acts on $( \bC^n )^{\otimes k}$ by permuting the legs of the tensor product.  For $\bm{i} \in \{1, \hdots, n\}^{\times k}$, set $\sgn(\bm{i}) = 0$ if not all entries of $\bm{i}$ are distinct; otherwise, set $\sgn(\bm{i}) = \sgn(\sigma)$, where $\sigma \in \mathfrak{S}_k$ is the unique permutation such that $\sigma(\bm{i})_1 < \hdots < \sigma(\bm{i})_k$.  Then the matrix entries of $A$ with respect to the basis $\{ e_{\wedge \bm{a}} \}_{\bm{a} \in \mathcal{A}_k}$ of $\wedge^k \bC^n$ and the basis $\{ e_{\otimes \bm{i}} \}_{\bm{i} \in \{1, \hdots, n\}^{\times k}}$ of $( \bC^n )^{\otimes k}$ are
$$ A_{\bm{i}, \bm{a}} = \frac{\sgn(\bm{i})}{\sqrt{k!}} \delta(\mathrm{sr}(\bm{i}), \bm{a}), $$
where the map $\mathrm{sr}$ sorts the entries of a vector in nondecreasing order, that is,
$$ \{ \mathrm{sr}(\bm{i})_1, \hdots, \mathrm{sr}(\bm{i})_k \} = \{ i_1, \hdots, i_k \}, \qquad \mathrm{sr}(\bm{i})_1 \le \hdots \le \mathrm{sr}(\bm{i})_k.$$
The transpose $A^T$ is a pseudoinverse such that $A^T A e_{\wedge \bm{a}} = e_{\wedge \bm{a}}$, and the map $H \mapsto AHA^T$ embeds the space of ${{n} \choose {k}}$-by-${{n} \choose {k}}$ Hermitian matrices into the space of $n^k$-by-$n^k$ Hermitian matrices.  For an  ${{n} \choose {k}}$-by-${{n} \choose {k}}$ Hermitian matrix $H$, define $\pi(H) = \pi_1(AHA^T).$

We can now state the quantum marginal problem for $k$ indistinguishable fermions.  Let $\lambda = (\lambda_{\bm{a}})_{\bm{a} \in \mathcal{A}_k} \in \bR^{{n} \choose {k}}$, and let $U$ be a Haar-distributed random ${{n} \choose {k}}$-by-${{n} \choose {k}}$ unitary matrix.  Set $H = U \mathrm{diag}(\lambda) U^\dagger$.  Fix a nonnegative integer $p$ and two sequences of indices $(i_1, \hdots, i_p),$ $(j_1, \hdots, j_p) \in \{1, \hdots, n\}^{\times p}$.  We want to derive a formula for the mixed moment
$$ \mathbb{E} \left[ \prod_{k = 1}^p \pi(H)_{i_k, j_k} \right]. $$

To solve this problem using Weingarten calculus, we will need the following expression for the entries of $\pi(H)$ in terms of the entries of $H$.

\begin{lem} \label{lem:proj_entries_fer}
\begin{equation} \label{eq:proj_entries_fer}
\pi(H)_{ij} = \frac{1}{k} \sum_{\substack{ \bm{l} \in \mathcal{A}_{k-1} \\ \{l_1, \hdots, l_{k-1}\} \not \ni i, j }} \sgn(i, \bm{l} ) \, \sgn(j, \bm{l} ) \, H_{\mathrm{sr}(i, \bm{l} ), \mathrm{sr}(j,\bm{l} )},
\end{equation}
where $\sgn(i, \bm{l} )$ means $\sgn(i, l_1, \hdots, l_{k-1} )$ and $\mathrm{sr}(i, \bm{l} )$ means $\mathrm{sr}(i, l_1, \hdots, l_{k-1} )$.
\end{lem}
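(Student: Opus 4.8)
The plan is to mirror the proof of Lemma~\ref{lem:proj_entries_bos}, replacing the stabilizer count for monomials by the sign bookkeeping appropriate to the exterior power. First I would compute the entries of $AHA^T$. Using the matrix elements $A_{\bm{i},\bm{a}} = \sgn(\bm{i})\,\delta(\mathrm{sr}(\bm{i}),\bm{a})/\sqrt{k!}$ (and the same for $A^T$), for $\bm{i},\bm{j}\in\{1,\dots,n\}^{\times k}$ one gets
$$
(AHA^T)_{\bm{i},\bm{j}} = \sum_{\bm{a},\bm{b}\in\mathcal{A}_k} A_{\bm{i},\bm{a}}\, H_{\bm{a},\bm{b}}\, A_{\bm{j},\bm{b}} = \frac{\sgn(\bm{i})\,\sgn(\bm{j})}{k!}\, H_{\mathrm{sr}(\bm{i}),\mathrm{sr}(\bm{j})}.
$$
Then applying the definition \eqref{eq:proj_multi_1} of $\pi_1$ gives
$$
\pi(H)_{ij} = \sum_{l_1,\dots,l_{k-1}=1}^n (AHA^T)_{i l_1\cdots l_{k-1},\, j l_1\cdots l_{k-1}} = \frac{1}{k!}\sum_{l_1,\dots,l_{k-1}=1}^n \sgn(i,\bm{l})\,\sgn(j,\bm{l})\, H_{\mathrm{sr}(i,\bm{l}),\mathrm{sr}(j,\bm{l})}.
$$

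Next I would discard the degenerate tuples. A term in the last sum vanishes unless $(i,l_1,\dots,l_{k-1})$ and $(j,l_1,\dots,l_{k-1})$ each have pairwise distinct entries, i.e. unless $l_1,\dots,l_{k-1}$ are distinct and $\{l_1,\dots,l_{k-1}\}$ contains neither $i$ nor $j$. For such a tuple, let $\bm{l}_0\in\mathcal{A}_{k-1}$ be its increasing rearrangement. The one point that needs care is to observe that both $\sgn(i,\bm{l})\,\sgn(j,\bm{l})$ and $H_{\mathrm{sr}(i,\bm{l}),\mathrm{sr}(j,\bm{l})}$ depend only on the set $\{l_1,\dots,l_{k-1}\}$ and not on its ordering: the sorted tuples $\mathrm{sr}(i,\bm{l})$ and $\mathrm{sr}(j,\bm{l})$ are manifestly order-invariant, and permuting $l_1,\dots,l_{k-1}$ among themselves by some $\rho\in\mathfrak{S}_{k-1}$ multiplies each of $\sgn(i,\bm{l})$ and $\sgn(j,\bm{l})$ by $\sgn(\rho)$, so their product is unchanged since $\sgn(\rho)^2=1$.

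Finally, grouping the $(k-1)!$ orderings of each admissible $(k-1)$-element subset (equivalently, summing over $\bm{l}\in\mathcal{A}_{k-1}$ with $\{l_1,\dots,l_{k-1}\}\not\ni i,j$ and inserting the factor $(k-1)!$) collapses the sum to
$$
\pi(H)_{ij} = \frac{(k-1)!}{k!}\sum_{\substack{\bm{l}\in\mathcal{A}_{k-1}\\ \{l_1,\dots,l_{k-1}\}\not\ni i,j}} \sgn(i,\bm{l})\,\sgn(j,\bm{l})\, H_{\mathrm{sr}(i,\bm{l}),\mathrm{sr}(j,\bm{l})},
$$
and $(k-1)!/k! = 1/k$ yields \eqref{eq:proj_entries_fer}. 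I expect the only genuine subtlety to be the sign computation in the middle step — checking that reordering the auxiliary indices leaves $\sgn(i,\bm{l})\,\sgn(j,\bm{l})$ invariant — but this is immediate once one notes that the two sign factors transform identically and that signs are $\pm 1$; the rest is bookkeeping strictly parallel to the bosonic case.
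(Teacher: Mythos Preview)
Your proposal is correct and follows essentially the same route as the paper: compute $(AHA^T)_{\bm{i},\bm{j}}$ from the matrix entries of $A$, insert into the definition of $\pi_1$, note that tuples with repeated entries drop out via $\sgn=0$, and collapse the $(k-1)!$ orderings of each admissible subset using $\sgn(i,\sigma(\bm{l}))\,\sgn(j,\sigma(\bm{l}))=\sgn(i,\bm{l})\,\sgn(j,\bm{l})\,\sgn(\sigma)^2$. The sign invariance you flag as the only subtlety is exactly the observation the paper isolates, and your treatment of it is correct.
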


\begin{proof}
We first compute
$$ (AHA^T)_{\bm{i}, \bm{j}} = \sum_{\bm{a} \in \mathcal{A}_k} A_{\bm{i}, \bm{a}} \sum_{\bm{b} \in \mathcal{A}_k} H_{\bm{a}, \bm{b}} A^T_{\bm{b}, \bm{j}} =  \frac{\sgn(\bm{i}) \, \sgn(\bm{j})}{k!} H_{\mathrm{sr}(\bm{i}), \mathrm{sr}(\bm{j})},$$
where we set $H_{\mathrm{sr}(\bm{i}), \mathrm{sr}(\bm{j})} = 0$ if $\bm{i}$ or $\bm{j}$ has repeated entries.  Then we have
\begin{align*}
\pi(H)_{ij} &= \sum_{l_1, \hdots, l_{k-1} = 1}^n (AHA^T)_{(i, \bm{l}), (j, \bm{l})} = \sum_{l_1, \hdots, l_{k-1} = 1}^n \frac{\sgn(i, \bm{l}) \, \sgn(j, \bm{l})}{k!} H_{\mathrm{sr}(i, \bm{l}), \mathrm{sr}(j, \bm{l})} \\
&= \frac{1}{k!} \sum_{\substack{ \bm{l} \in \mathcal{A}_{k-1} \\ \{l_1, \hdots, l_{k-1}\} \not \ni i, j }} H_{\mathrm{sr}(i, \bm{l}), \mathrm{sr}(j, \bm{l})} \sum_{\sigma \in \mathfrak{S}_{k-1}} \sgn(i, \sigma(\bm{l}) ) \, \sgn(j, \sigma(\bm{l})).
\end{align*}
Observing that
$$\sgn(i, \sigma(\bm{l}) ) \, \sgn(j, \sigma(\bm{l})) =  \sgn(i, \bm{l} ) \, \sgn(j, \bm{l}) \, \sgn(\sigma)^2 =  \sgn(i, \bm{l} ) \, \sgn(j, \bm{l})$$
for all $\sigma \in \mathfrak{S}_{k-1}$, we obtain (\ref{eq:proj_entries_fer}).
\end{proof}

Similarly to the bosonic case, from Lemmas \ref{lem:CMS} and \ref{lem:proj_entries_fer}, we obtain the desired formula:

\begin{thm} \label{thm:fer-moments}
The mixed moments of the entries of a single-particle marginal for a system of $k$ indistinguishable fermions are given by
\begin{equation*} 
\mathbb{E} \left[ 
\prod_{s=1}^p \pi (H)_{i_s, j_s} 
\right]
= \sum_{\sigma, \tau \in \mathfrak{S}_p} \Wg_N( \sigma^{-1}\tau) \Tr_\tau(\lambda)
\Delta_{\sigma}^{n,k}(\bm{i}, \bm{j})
\end{equation*} 
with
\begin{multline} \label{eqn:Delta-fer}
\Delta_{\sigma}^{n,k}(\bm{i}, \bm{j})
=  \frac{1}{k^p}
\sum_{\substack{ \bm{l}^{(1)} \in \mathcal{A}_{k-1} \\ \bm{l}^{(1)} \not \ni i_1, j_1 }} \cdots
\sum_{\substack{ \bm{l}^{(p)} \in \mathcal{A}_{k-1} \\ \bm{l}^{(p)} \not \ni i_p, j_p }}
\prod_{s=1}^p \{\sgn(i_s, \bm{l}^{(s)}) \sgn(j_s, \bm{l}^{(s)})\}
 \\
\times \delta_{\sigma} \Big( \big( \mathrm{sr}(i_1,\bm{l}^{(1)}), \dots ,
 \mathrm{sr}(i_p,\bm{l}^{(p)}) \big), 
 \big( \mathrm{sr}(j_1,\bm{l}^{(1)}), \dots ,
 \mathrm{sr}(j_p,\bm{l}^{(p)}) \big) \Big).
\end{multline}
\end{thm}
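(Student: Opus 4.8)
The plan is to mirror the argument for bosons (Theorem~\ref{thm:bos-moments}), combining the entrywise expansion of Lemma~\ref{lem:proj_entries_fer} with the Weingarten formula of Lemma~\ref{lem:CMS}. First I would substitute the expression from Lemma~\ref{lem:proj_entries_fer} for each of the $p$ factors $\pi(H)_{i_s,j_s}$ and multiply out, using distributivity of the product over the sums, to obtain
\[
\prod_{s=1}^p \pi(H)_{i_s,j_s} = \frac{1}{k^p}\sum_{\bm{l}^{(1)},\dots,\bm{l}^{(p)}} \prod_{s=1}^p \sgn(i_s,\bm{l}^{(s)})\,\sgn(j_s,\bm{l}^{(s)}) \cdot \prod_{s=1}^p H_{\mathrm{sr}(i_s,\bm{l}^{(s)}),\,\mathrm{sr}(j_s,\bm{l}^{(s)})},
\]
where, for each $s$, the tuple $\bm{l}^{(s)}$ ranges over $\mathcal{A}_{k-1}$ with $\bm{l}^{(s)}\not\ni i_s,j_s$, exactly as in \eqref{eqn:Delta-fer}.

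Next I would take the expectation of both sides and pull the deterministic sign prefactors, together with $\tfrac{1}{k^p}$ and the outer sum over the $\bm{l}^{(s)}$, outside $\mathbb{E}$. It then remains to compute $\mathbb{E}\big[\prod_{s=1}^p H_{\mathrm{sr}(i_s,\bm{l}^{(s)}),\,\mathrm{sr}(j_s,\bm{l}^{(s)})}\big]$ for fixed tuples. Identifying $\wedge^k\bC^n\cong\bC^N$ with $N=\binom{n}{k}$ via the orthonormal basis $\{e_{\wedge\bm{a}}\}_{\bm{a}\in\mathcal{A}_k}$, each $\mathrm{sr}(i_s,\bm{l}^{(s)})$ and $\mathrm{sr}(j_s,\bm{l}^{(s)})$ is an element of $\mathcal{A}_k$ and hence a legitimate index in $\{1,\dots,N\}$. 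Since $H=U\diag(\lambda)U^\dagger$ with $\diag(\lambda)$ real Hermitian, Lemma~\ref{lem:CMS} applies with $A=\diag(\lambda)$, row-index sequence $(\mathrm{sr}(i_1,\bm{l}^{(1)}),\dots,\mathrm{sr}(i_p,\bm{l}^{(p)}))$ and column-index sequence $(\mathrm{sr}(j_1,\bm{l}^{(1)}),\dots,\mathrm{sr}(j_p,\bm{l}^{(p)}))$, producing a sum over $\sigma,\tau\in\mathfrak{S}_p$ of $\delta_\sigma(\cdots)\,\Wg_N(\sigma^{-1}\tau)\,\Tr_\tau(\lambda)$.

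Finally I would interchange the order of summation, moving the sum over $\sigma,\tau$ to the outside and factoring out $\Wg_N(\sigma^{-1}\tau)\Tr_\tau(\lambda)$. What remains inside is precisely $\tfrac{1}{k^p}$ times the sum over $\bm{l}^{(1)},\dots,\bm{l}^{(p)}$ of the product of the signs times $\delta_\sigma$ applied to the pair of sorted tuples, which is the definition of $\Delta_\sigma^{n,k}(\bm{i},\bm{j})$ in \eqref{eqn:Delta-fer}. This gives the asserted formula.

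I do not expect a serious obstacle here; the proof is a direct transcription of the bosonic argument. The only points requiring care are bookkeeping in nature: (i) checking that the terms with $\bm{l}^{(s)}\ni i_s$ or $\bm{l}^{(s)}\ni j_s$ contribute nothing, since then $\sgn(i_s,\bm{l}^{(s)})$ or $\sgn(j_s,\bm{l}^{(s)})$ vanishes, so that restricting the summation range as in Lemma~\ref{lem:proj_entries_fer} is harmless; (ii) verifying that $\mathrm{sr}(i_s,\bm{l}^{(s)})$ and $\mathrm{sr}(j_s,\bm{l}^{(s)})$ genuinely land in $\mathcal{A}_k$, so that Lemma~\ref{lem:CMS} is applied with the correct value $N=\binom{n}{k}$; and (iii) keeping the marginal indices $\bm{i},\bm{j}\in\{1,\dots,n\}^{\times p}$ notationally distinct from the Hilbert-space indices in $\mathcal{A}_k$ that feed into the Weingarten formula. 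If anything is delicate it is purely the matching of the multi-sums over the $\bm{l}^{(s)}$ with those appearing in \eqref{eqn:Delta-fer}.
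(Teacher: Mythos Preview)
Your proposal is correct and follows exactly the paper's approach: the paper simply states that the formula is obtained ``similarly to the bosonic case, from Lemmas~\ref{lem:CMS} and~\ref{lem:proj_entries_fer},'' which is precisely the expand--take-expectation--apply-Weingarten--rearrange argument you describe. The bookkeeping points you flag are the only things to check, and they are all routine.
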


This $\Delta$-function may be decomposed into the cycles of $\sigma$ as in \eqref{eq:delta-cycle-decomposition}.

\subsubsection{Fermions: A simple example}

Again consider the case $p=1$. 
Then, for the unique permutation $\mathrm{id}_1 \in \mathfrak{S}_1$,
\begin{align*}
\Delta_{\mathrm{id}_1}^{n,k} (i,j)
&=
\frac{1}{k}
\sum_{\substack{ \bm{l} \in \mathcal{A}_{k-1} \\ \bm{l} \not \ni i, j }} 
 \sgn(i, \bm{l}) \sgn(j, \bm{l})
 \delta \big( \mathrm{sr}(i,\bm{l}), \mathrm{sr}(j,\bm{l}) \big) \\
&= \delta(i,j) \frac{1}{k} \sum_{\substack{ \bm{l} \in \mathcal{A}_{k-1} \\ \bm{l} \not \ni i}} 1 
= \delta(i,j) \frac{1}{k} \binom{n-1}{k-1}.
\end{align*}
(Here we have identified the strictly increasing sequence $\bm{l}=(l_1<l_2< \cdots<l_{k-1})$ with 
the set $\{l_1,l_2,\dots, l_{k-1}\}$.)
Therefore, just as in the bosonic case, we find
\[
\mathbb{E} \left[\pi(H)_{i,j} \right]= \Tr(\lambda) \frac{\delta(i,j)}{n}.
\]

More generally, for the identity permutation $\mathrm{id}_p \in \mathfrak{S}_p$, 
we can see that
\begin{equation}
\Delta_{\mathrm{id}_p}^{n,k}(\bm{i},\bm{j})= \frac{1}{k^p}\delta(\bm{i},\bm{j}) 
\binom{n-1}{k-1}^p = \delta(\bm{i},\bm{j}) \left[\frac{1}{n} \binom{n}{k}\right]^p.
\end{equation}

\subsection{Some further observations}

\begin{lem} \label{lem:Delta_fermion_trans}
For the transposition $(1 \ 2) \in \mathfrak{S}_2$, the fermionic $\Delta$-function defined in (\ref{eqn:Delta-fer}) is given by
\begin{align*}
\Delta_{(1 \ 2)}^{n,k} \big( (i_1,i_2), (j_1,j_2) \big)
&= \delta(i_1, j_1) \, \delta(i_2, j_2) \frac{1}{k^2} \binom{n-2}{k-2} + 
\delta(i_1, j_2) \, \delta(i_2, j_1) \frac{1}{k^2}\binom{n-2}{k-1} \\
&= \begin{cases}
\frac{1}{k^2} \binom{n-1}{k-1} & \text{if $i_1=i_2=j_1=j_2$}, \\
\frac{1}{k^2} \binom{n-2}{k-1} & \text{if $i_1=j_2 \not= i_2=j_1$}, \\
\frac{1}{k^2}\binom{n-2}{k-2} & \text{if $i_1=j_1 \not= i_2=j_2$}, \\
0 & \text{otherwise}.
\end{cases} 
\end{align*}
\end{lem}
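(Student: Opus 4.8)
The plan is to expand the definition (\ref{eqn:Delta-fer}) in the case $p = 2$, $\sigma = (1\ 2)$, and then determine exactly which pairs $(\bm{l}^{(1)}, \bm{l}^{(2)}) \in \mathcal{A}_{k-1} \times \mathcal{A}_{k-1}$ contribute a nonzero summand. Write $L_r \subset \{1, \hdots, n\}$ for the $(k-1)$-element set underlying $\bm{l}^{(r)}$, so that the summation constraints read $i_1, j_1 \notin L_1$ and $i_2, j_2 \notin L_2$. Since $\delta_{(1\ 2)}(\bm{a},\bm{b}) = \delta(a_2, b_1)\,\delta(a_1, b_2)$ by (\ref{def:delta_function}), and since the product of componentwise Kronecker deltas of two sorted sequences with pairwise distinct entries equals $1$ precisely when the underlying sets coincide, the $\delta_\sigma$-factor in (\ref{eqn:Delta-fer}) is nonzero exactly when
\begin{equation*}
\{i_2\} \cup L_2 = \{j_1\} \cup L_1 \quad\text{and}\quad \{i_1\} \cup L_1 = \{j_2\} \cup L_2 .
\end{equation*}

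The heart of the argument is a case analysis of this pair of set identities under the four support constraints. First I would look at the first identity: if $i_2 = j_1$, then since $j_1 \notin L_1$ and $i_2 \notin L_2$ one gets $L_1 = L_2$, and feeding this into the second identity together with $i_1 \notin L_1$ forces $i_1 = j_2$; if instead $i_2 \ne j_1$, then necessarily $i_2 \in L_1$, $j_1 \in L_2$, and $L_2 = \{j_1\} \cup (L_1 \setminus \{i_2\})$, and now imposing $j_2 \notin L_2$ (note $j_2 \ne j_1$ since $j_1 \in L_2$ but $j_2 \notin L_2$) forces $j_2 = i_2$, after which the second identity forces $i_1 = j_1$. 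Consequently the contributing configurations split into exactly two families: \textbf{(a)} $i_2 = j_1$ and $i_1 = j_2$, with $L_1 = L_2$ an arbitrary $(k-1)$-subset of $\{1, \hdots, n\} \setminus \{i_1, i_2\}$; and \textbf{(b)} $i_1 = j_1 \ne i_2 = j_2$, with $L_1$ an arbitrary $(k-1)$-subset of $\{1, \hdots, n\} \setminus \{i_1\}$ containing $i_2$ and $L_2 = \{i_1\} \cup (L_1 \setminus \{i_2\})$ then determined. One checks in passing that in family (b) the set $L_2$ so defined does lie in $\mathcal{A}_{k-1}$ and avoids $i_2, j_2$.

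It then remains to evaluate the sign and count the configurations. The sign factor $\prod_{s=1}^{2} \sgn(i_s, \bm{l}^{(s)})\,\sgn(j_s, \bm{l}^{(s)})$ is $+1$ in both families: in (a), writing $\bm{l} = \bm{l}^{(1)} = \bm{l}^{(2)}$ and using $j_1 = i_2$, $j_2 = i_1$, it equals $[\sgn(i_1,\bm{l})\,\sgn(i_2,\bm{l})]^2$; in (b), using $j_1 = i_1$, $j_2 = i_2$, it equals $\sgn(i_1,\bm{l}^{(1)})^2\,\sgn(i_2,\bm{l}^{(2)})^2$; all the signs that occur are nonzero because the distinguished index always lies outside the corresponding $L_r$. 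Hence each admissible configuration contributes exactly $1/k^2$, and a direct count gives: in family (a), $\binom{n-1}{k-1}$ configurations if $i_1 = i_2$ and $\binom{n-2}{k-1}$ if $i_1 \ne i_2$; in family (b) (which forces $i_1 \ne i_2$), $\binom{n-2}{k-2}$ configurations. Sorting by index pattern --- all four indices equal; $i_1 = j_2 \ne i_2 = j_1$; $i_1 = j_1 \ne i_2 = j_2$; and all remaining patterns, which admit no configuration at all --- produces the piecewise formula, and the compact first expression follows from it once one notes that when all four indices coincide both $\delta$-products there contribute and $\binom{n-2}{k-2} + \binom{n-2}{k-1} = \binom{n-1}{k-1}$.

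I expect the only genuine obstacle to be the middle step: carefully showing that the two set identities plus the support constraints pin the data down to exactly families (a) and (b), ruling out spurious solutions and correctly identifying the remaining freedom in $L_1$ (and the induced value of $L_2$ in family (b)). Once that is in place, the sign bookkeeping and the binomial count are routine.
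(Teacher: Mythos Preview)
Your proposal is correct and follows essentially the same route as the paper: reduce the $\delta_{(1\ 2)}$-factor to a pair of set identities, then case-split on whether the ``cross'' indices coincide to pin down $L_1,L_2$ (the paper splits on $i_1=j_2$ versus $i_1\neq j_2$, you on $i_2=j_1$ versus $i_2\neq j_1$, which is the symmetric choice). One small point: in the branch $i_2\neq j_1$ your sentence ``imposing $j_2\notin L_2$ forces $j_2=i_2$'' is not quite self-contained---$j_2\notin L_2$ only gives $j_2=i_2$ or $j_2\notin L_1$, and you need the second set identity (which yields $j_2\in L_1$ once $i_1\neq j_2$) to rule out the latter; you already flag this step as the genuine obstacle, so just make that use of the second identity explicit. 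Your explicit verification that the sign factor is $+1$ is a nice addition the paper leaves implicit.
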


\begin{proof}
The definition of 
$\Delta_{\sigma}^{n,k}\big( \bm{i}, \bm{j}\big)$ implies that 
\begin{multline*}
\Delta_{(1 \ 2)}^{n,k}\big( (i_1,i_2), (j_1,j_2) \big)
=  \frac{1}{k^2}
\sum_{\substack{ \bm{l}^{(1)} \in \mathcal{A}_{k-1} \\ \bm{l}^{(1)} \not \ni i_1, j_1 }}
\sum_{\substack{ \bm{l}^{(2)} \in \mathcal{A}_{k-1} \\ \bm{l}^{(2)} \not \ni i_2, j_2 }}
\prod_{s=1}^2 \sgn(i_s, \bm{l}^{(s)}) \sgn(j_s, \bm{l}^{(s)})
 \\
\times 
\delta \big( \mathrm{sr}(i_1, \bm{l}^{(1)}),  \mathrm{sr}(j_2, \bm{l}^{(2)}) \big) \times
\delta\big( \mathrm{sr}(i_2, \bm{l}^{(2)}),  \mathrm{sr}(j_1, \bm{l}^{(1)}) \big).
\end{multline*}
Only the terms such that
\begin{equation} \label{eq:fermion_sequence_condition1}
\mathrm{sr}(i_1, \bm{l}^{(1)})= \mathrm{sr}(j_2, \bm{l}^{(2)}) \qquad \text{and} \qquad 
\mathrm{sr}(j_1, \bm{l}^{(1)})= \mathrm{sr}(i_2, \bm{l}^{(2)}) 
\end{equation}
contribute.

\begin{enumerate}
\item[(i)] First, suppose that $i_1=j_2$. Then, by \eqref{eq:fermion_sequence_condition1}, $\bm{l}^{(2)}$ must coincide with $\bm{l}^{(1)}$, and so $i_2=j_1$.
Hence, in this case, 
\[
\Delta_{(1 \ 2)}^{n,k}\big( (i_1,i_2), (j_1,j_2) \big)
=  \frac{1}{k^2}
\sum_{\substack{ \bm{l}^{(1)} \in \mathcal{A}_{k-1} \\ \bm{l}^{(1)} \not \ni i_1, j_1 }}1
= \begin{cases}
\frac{1}{k^2} \binom{n-1}{k-1} & \text{if $i_1=j_1$}, \\
\frac{1}{k^2} \binom{n-2}{k-1} & \text{if $i_1 \not=j_1$}.
\end{cases}
\]
\item[(ii)] Next, suppose that $i_1 \not=j_2$. 
Then,  by \eqref{eq:fermion_sequence_condition1},
the following equations all hold.
\[
i_2,j_2 \in \bm{l}^{(1)}, \quad \bm{l}^{(2)}= \bm{l}^{(1)} \cup \{i_1\} \setminus \{j_2\} = \bm{l}^{(1)} \cup \{j_1\} \setminus \{i_2\}.  
\]
This is only valid when $i_1=j_1$ and $i_2=j_2$.
Hence, since $i_1 \not=i_2 (=j_2)$, we have
\[
\Delta_{(1 \ 2)}^{n,k}\big( (i_1,i_2), (j_1,j_2) \big)
=  \frac{1}{k^2}
\sum_{\substack{ \bm{l}^{(1)} \in \mathcal{A}_{k-1} \\ \bm{l}^{(1)} \ni i_2, \ \bm{l}^{(1)} \not \ni i_1 }}1
= \frac{1}{k^2} \binom{n-2}{k-2}.
\]
\end{enumerate}
\end{proof}

Lemma \ref{lem:Delta_fermion_trans} indicates that 
$\Delta^{n,k}_{\sigma}(\bm{i},\bm{j})$
is \emph{not} of the form 
$\delta_\sigma(\bm{i},\bm{j}) \times X_\sigma^{n,k}$.
This fact, which holds as well for the bosonic $\Delta$-function defined in (\ref{eqn:Delta-bos}), poses an obstacle to further simplifying the formulae in Theorems \ref{thm:bos-moments} and \ref{thm:fer-moments}.  We suspect that is is computationally difficult to calculate the general moments in both cases.

\section{Asymptotic analysis} \label{sec:asymptotics}

We now return to the setting of two distinguishable particles, introduced in \S\ref{subsec:our_q}.  As an application of Theorem \ref{thm:projection_distribution}, we study the asymptotics of the moments in two regimes: the limit as $m \to \infty$ with $n$ fixed, and the limit as $n \to \infty$ with $m = \lceil cn \rceil$ for $c > 0$.

To pose questions about the asymptotics of the moments, we will need to consider \textit{sequences} of spectra $\lambda^{(k)} = (\lambda^{(k)}_{ij})_{ij=11}^{mn} \in \mathbb{R}^{mn}$, where $m = m(k)$ and $n = n(k)$ for $k = 1, 2, 3, \hdots$, and we will generally need to impose some kind of convergence condition on $\lambda^{(k)}$ as $k \to \infty$.  To this end, it is useful to introduce the \textit{empirical spectral measure}
\begin{equation} \label{eqn:esm-def}
\mu[\lambda^{(k)}] = \frac{1}{mn} \sum_{ij=11}^{mn} \delta_{\lambda^{(k)}_{ij}},
\end{equation}
where $\delta_{\lambda^{(k)}_{ij}}$ is a Dirac mass at $\lambda^{(k)}_{ij} \in \bR$.  We then can make the further assumption that $\mu[\lambda^{(k)}]$ converges weakly to some compactly supported probability measure $\mu$ on $\bR$, which will allow us to describe the asymptotic behavior of the quantum marginals in terms of the moments of this limiting measure $\mu$.  Recall that for $p \ge 1$, the $p$th \textit{moment} of a probability measure $\mu$ on $\bR$ is the quantity
\begin{equation}
M_p(\mu) = \int_{-\infty}^\infty x^p \, d\mu(x).
\end{equation}
If $(\mu_k)_{k \ge 1}$ is a sequence of probability measures supported in some bounded interval $[-K, K] \subset \bR$ and $\mu$ is another probability measure supported in $[-K, K]$, then $\mu_k$ converges weakly to $\mu$ if and only if $M_p(\mu_k) \to M_p(\mu)$ for all $p = 1, 2, 3, \hdots$ as $k \to \infty$.

Recall that for a permutation $\sigma \in \mathfrak{S}_d$, we write $\kappa(\sigma)$ for the total number of cycles in its cycle decomposition. Let $(c_1, \hdots, c_{\kappa(\sigma)})$ be the cycle type of $\sigma$.  We define
\begin{equation}
M_\sigma(\mu) = \prod_{j=1}^{\kappa(\sigma)} M_{c_j}(\mu).
\end{equation}
Our calculations below will make use of the following fact.
\begin{lem} \label{lem:trace-moments}
Let $\sigma \in \mathfrak{S}_d$.  For $\lambda =(\lambda_{ij})_{ij=11}^{mn} \in \mathbb{R}^{mn}$, define the empirical spectral measure $\mu[\lambda]$ as in (\ref{eqn:esm-def}).  Then
\begin{equation}
\Tr_\sigma(\lambda) = (mn)^{\kappa(\sigma)} M_\sigma(\mu[\lambda]).
\end{equation}
\end{lem}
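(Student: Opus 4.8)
The plan is to simply unwind both sides of the claimed identity down to the level of the eigenvalues $\lambda_{ij}$ and check that they agree term by term; there is no real obstacle here, only bookkeeping in matching the definitions of $\Tr_\sigma$, $M_\sigma$, and the empirical spectral measure.

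First I would record the effect of a single cycle. If $c$ is a positive integer, then $\diag(\lambda)^c = \diag\big((\lambda_{ij}^c)_{ij=11}^{mn}\big)$, so
\begin{equation*}
\Tr\big(\diag(\lambda)^c\big) = \sum_{ij=11}^{mn} \lambda_{ij}^c.
\end{equation*}
On the other hand, by the definition \eqref{eqn:esm-def} of $\mu[\lambda]$, the $c$th moment is
\begin{equation*}
M_c(\mu[\lambda]) = \int_{-\infty}^\infty x^c \, d\mu[\lambda](x) = \frac{1}{mn} \sum_{ij=11}^{mn} \lambda_{ij}^c,
\end{equation*}
so that $\Tr\big(\diag(\lambda)^c\big) = (mn) \, M_c(\mu[\lambda])$.

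Next I would assemble the cycles. Writing $(c_1,\dots,c_{\kappa(\sigma)})$ for the cycle type of $\sigma$, the definition of $\Tr_\sigma$ from Lemma \ref{lem:CMS} (applied with $A = \diag(\lambda)$) gives
\begin{equation*}
\Tr_\sigma(\lambda) = \prod_{j=1}^{\kappa(\sigma)} \Tr\big(\diag(\lambda)^{c_j}\big) = \prod_{j=1}^{\kappa(\sigma)} (mn)\, M_{c_j}(\mu[\lambda]) = (mn)^{\kappa(\sigma)} \prod_{j=1}^{\kappa(\sigma)} M_{c_j}(\mu[\lambda]),
\end{equation*}
and the final product is exactly $M_\sigma(\mu[\lambda])$ by definition. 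This completes the proof; the only point requiring any care is keeping the cycle-type indexing consistent between the two definitions, which is immediate.
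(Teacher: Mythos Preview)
Your proof is correct and follows essentially the same route as the paper: both reduce to the single-cycle identity $\Tr(\diag(\lambda)^{c}) = mn\, M_{c}(\mu[\lambda])$, verify it by summing $\lambda_{ij}^{c}$ and comparing with the definition of the empirical measure, and then take the product over cycles.
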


\begin{proof}
Again write $(c_1, \hdots, c_{\kappa(\sigma)})$ for the cycle type of $\sigma$.  By definition,
$$\Tr_\sigma(\lambda) = \prod_{l=1}^{\kappa(\sigma)} \Tr(\diag(\lambda)^{c_l}),$$
so it suffices to show that $\Tr(\diag(\lambda)^{c_l}) = mn \cdot M_{c_l}(\mu[\lambda])$.  Indeed, we find
\[
	\Tr(\diag(\lambda)^{c_l}) = \sum_{ij=11}^{mn} \lambda_{ij}^{c_l} = mn \cdot \int_{-\infty}^\infty x^{c_l} \, d \bigg( \frac{1}{mn} \sum_{ij=11}^{mn} \delta_{\lambda_{ij}}(x) \bigg),
\]
as desired.
\end{proof}

\subsection{Large $m$, fixed $n$}
Here we take $n$ fixed and consider the limit as $m \to \infty.$
Recall that in the setting of \S\ref{subsec:our_q}, $H$ is a random Hermitian matrix given by
$H=(H_{ij, kl}) = U \diag (\lambda) U^\dagger$, where
$\lambda =(\lambda_{ij})_{ij=11}^{mn} \in \mathbb{R}^{mn}$ and $U$ is a Haar-distributed unitary matrix of size $mn$.
The marginals of $H$ are the $m \times m$ Hermitian matrix $\pi_1(H) =(\pi_1(H)_{i,k})$ and 
$n \times n$ Hermitian matrix $\pi_2(H)=(\pi_2(H)_{j,l})$ defined by
\begin{equation*} 
\pi_1(H)_{i,k}= \sum_{j=1}^n H_{ij, kj}, \qquad \pi_2(H)_{j,l}= \sum_{i=1}^m H_{ij, il}.
\end{equation*}
For sequences of indices
\begin{align*}
\bm{i}=(i_1,i_2,\dots,i_p), \ \bm{k}=(k_1,k_2,\dots,k_p)  \quad &\in \{1,2,\dots\}^{\times p}, \\ 
\bm{j}=(j_1, j_2, \dots, j_q), \ \bm{l}=(l_1,l_2,\dots,l_q)  \quad &\in \{1,2,\dots\}^{\times q},
\end{align*}
we define
\[
  s(\bm{i}, \bm{k}) = \max \big \{ \kappa(\sigma) \ \big | \ \sigma \in \mathfrak{S}_{p}, \ \sigma(\bm{i}) = \bm{k} \big \},
\]
\[
  s(\bm{j}, \bm{l}) = \max \big \{ \kappa(\tau) \ \big | \ \tau \in \mathfrak{S}_{q}, \ \tau(\bm{j}) = \bm{l} \big \}.
\]

\medskip

\begin{thm} \label{thm:asymp-m}
Let $p, q$ be nonnegative integers. Fix $n \ge 2$ and take sequences of indices
\begin{align*}
\bm{i}=(i_1,i_2,\dots,i_p), \ \bm{k}=(k_1,k_2,\dots,k_p)  \quad &\in \{1,2,\dots\}^{\times p}, \\ 
\bm{j}=(j_1, j_2, \dots, j_q), \ \bm{l}=(l_1,l_2,\dots,l_q)  \quad &\in \{1,2,\dots,n\}^{\times q}.
\end{align*}
Let $\mathfrak{m} = \max \{ i_1, \hdots, i_p, k_1, \hdots, k_p \}$ and choose a sequence $(\lambda^{(m)})_{m \ge \mathfrak{m}}$, where $\lambda^{(m)} \in \mathbb{R}^{mn}$.  Set $H^{(m)} = U(m) \diag(\lambda^{(m)}) U(m)^\dagger$, where $U(m)$ is an $mn \times mn$ Haar unitary.  Suppose that the measures $\mu[\lambda^{(m)}]$ all have support contained in some bounded interval $[-K, K] \subset \bR$, satisfy a uniform bound $| M_k(\mu[\lambda^{(m)}])| < C$ for all $k$ and $m$, and converge weakly to a probability measure $\mu$.  Then as $m \to \infty$,
\begin{multline} \label{eqn:asymp-m}
\mathbb{E} \left[ 
\prod_{\alpha=1}^p \pi_1(H^{(m)})_{i_\alpha, k_\alpha} \cdot 
\prod_{\beta=1}^q \pi_2(H^{(m)})_{j_\beta, l_\beta}
\right] \\
=  \left[ 
\sum_{\substack{\sigma \in \mathfrak{S}_{p} \\ \kappa(\sigma) = s(\bm{i}, \bm{k})}} \sum_{\substack{ \tau \in \mathfrak{S}_{q} \\ \kappa(\tau) = s(\bm{j}, \bm{l})}} \delta_\sigma (\bm{i},\bm{k}) \, \delta_\tau (\bm{j},\bm{l}) \, M_\sigma(\mu) M_\tau(\mu) \right] n^{2 s(\bm{i}, \bm{k}) + s(\bm{j}, \bm{l}) -p-q} \, m^{s(\bm{i}, \bm{k}) + 2 s(\bm{j}, \bm{l})-p-q} \cdot \big(1 + o(1) \big).
\end{multline}
\end{thm}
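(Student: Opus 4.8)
The plan is to feed the asymptotic inputs into the exact formula of Theorem~\ref{thm:projection_distribution} and then extract the dominant terms. First I would replace each $\Tr_\tau(\lambda^{(m)})$ by $(mn)^{\kappa(\tau)}M_\tau(\mu[\lambda^{(m)}])$ using Lemma~\ref{lem:trace-moments}, and estimate the Weingarten factor with Lemma~\ref{lem:Wg-asymp} together with its standard refinement $\Wg_{mn}(\rho)=O\big((mn)^{-(p+q)-|\rho|}\big)$, where $|\rho|=(p+q)-\kappa(\rho)$ is the word length (this follows from \cite[Theorem~2.2]{Collins03}). Because $|\sigma^{-1}\tau|+|\tau|\ge|\sigma|$ by the triangle inequality in $\mathfrak{S}_{p+q}$, one gets $\kappa(\sigma^{-1}\tau)+\kappa(\tau)\le(p+q)+\kappa(\sigma)$, so each summand of Theorem~\ref{thm:projection_distribution} is bounded --- uniformly in $m$, using the hypothesis $|M_k(\mu[\lambda^{(m)}])|<C$ --- by a constant times
\[
n^{\kappa_2(\sigma)}m^{\kappa_1(\sigma)}(mn)^{\kappa(\sigma)-(p+q)}=m^{a+2b+c-p-q}\,n^{2a+b+c-p-q},
\]
where, writing $A=\{1,\dots,p\}$ and $B=\{p+1,\dots,p+q\}$, $a=\kappa_2(\sigma)$ counts the cycles of $\sigma$ contained in $A$, $b=\kappa_1(\sigma)$ those contained in $B$, and $c$ the number of remaining (``mixed'') cycles, so that $\kappa(\sigma)=a+b+c$.

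Since $n$ is held fixed, the order of a summand is controlled by $a+2b+c$, and the next step is to maximize this over the $\sigma$ for which $\delta_{\pr_1(\sigma)}(\bm i,\bm k)\,\delta_{\pr_2(\sigma)}(\bm j,\bm l)\ne0$. When that indicator is nonzero, $\pr_1(\sigma)$ is an element of $\mathfrak{S}_p$ carrying $\bm i$ to $\bm k$; and since deleting the letters of $B$ turns exactly the all-$A$ cycles and the mixed cycles of $\sigma$ into the cycles of $\pr_1(\sigma)$, one has $\kappa(\pr_1(\sigma))=a+c\le s(\bm i,\bm k)$, and symmetrically $b+c\le s(\bm j,\bm l)$. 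Hence $a+2b+c=(a+c)+2b\le s(\bm i,\bm k)+2s(\bm j,\bm l)$, with equality precisely when $c=0$, $a=s(\bm i,\bm k)$, $b=s(\bm j,\bm l)$; that is, when $\sigma=\sigma_1\times\sigma_2$ with $\sigma_1\in\mathfrak{S}_p$, $\sigma_2\in\mathfrak{S}_q$, $\kappa(\sigma_1)=s(\bm i,\bm k)$, $\kappa(\sigma_2)=s(\bm j,\bm l)$ (and then automatically $\delta_{\sigma_1}(\bm i,\bm k)\ne0$, $\delta_{\sigma_2}(\bm j,\bm l)\ne0$, $\pr_1(\sigma)=\sigma_1$, $\pr_2(\sigma)=\sigma_2$, $\kappa_2(\sigma)=s(\bm i,\bm k)$, $\kappa_1(\sigma)=s(\bm j,\bm l)$). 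All other $(\sigma,\tau)$ summands are $o\big(m^{s(\bm i,\bm k)+2s(\bm j,\bm l)-p-q}\big)$, and being finite in number they are absorbed into the $o(1)$ error.

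For a maximal $\sigma=\sigma_1\times\sigma_2$ it remains to evaluate $\sum_{\tau\in\mathfrak{S}_{p+q}}\Wg_{mn}(\sigma^{-1}\tau)\Tr_\tau(\lambda^{(m)})=\sum_{\rho\in\mathfrak{S}_{p+q}}\Wg_{mn}(\rho)\,(mn)^{\kappa(\sigma\rho)}M_{\sigma\rho}(\mu[\lambda^{(m)}])$. Summands with $|\rho|+|\sigma\rho|>|\sigma|$ are $o\big((mn)^{\kappa(\sigma)-(p+q)}\big)$; the remaining $\rho$, which lie on a geodesic from the identity to $\sigma$ in the Cayley graph of $\mathfrak{S}_{p+q}$, each contribute at the exact order $(mn)^{\kappa(\sigma)-(p+q)}$, and by the convergence $M_k(\mu[\lambda^{(m)}])\to M_k(\mu)$ their leading coefficients sum to a limiting constant that depends only on the cycle type of $\sigma$. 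Multiplying by $n^{\kappa_2(\sigma)}m^{\kappa_1(\sigma)}=n^{s(\bm i,\bm k)}m^{s(\bm j,\bm l)}$ turns the $m,n$ prefactor into $n^{2s(\bm i,\bm k)+s(\bm j,\bm l)-p-q}\,m^{s(\bm i,\bm k)+2s(\bm j,\bm l)-p-q}$, and summing over all admissible $\sigma_1\times\sigma_2$ --- with $\delta_{\sigma_1}(\bm i,\bm k)\,\delta_{\sigma_2}(\bm j,\bm l)$ picking out the admissible pairs --- reassembles the bracketed sum in \eqref{eqn:asymp-m}.

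I expect the last step to be the main obstacle: identifying the exact leading coefficient of $\sum_\tau\Wg_{mn}(\sigma^{-1}\tau)\Tr_\tau(\lambda^{(m)})$ for a general maximal $\sigma$ needs the fine structure of Weingarten asymptotics --- the geodesic and non-crossing-partition combinatorics of $\mathfrak{S}_{p+q}$ and the cancellations among the subleading M\"obius corrections --- rather than merely Lemma~\ref{lem:Wg-asymp}. By comparison, the order-counting of the first step and the maximization of $a+2b+c$ in the second are routine once the inequality $\kappa(\sigma^{-1}\tau)+\kappa(\tau)\le(p+q)+\kappa(\sigma)$ is in hand. One point deserving attention: the constant produced by these M\"obius cancellations is most naturally a product of \emph{free} cumulants of $\mu$ over the cycles of $\sigma$, so some care is needed to reconcile it with the product of moments $M_{\sigma_1}(\mu)M_{\sigma_2}(\mu)$ appearing in the statement.
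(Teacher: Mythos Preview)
Your strategy is the paper's strategy --- start from Theorem~\ref{thm:projection_distribution}, substitute Lemma~\ref{lem:trace-moments}, and extract the leading $m$-behaviour via Weingarten asymptotics --- but you are more careful about which $(\sigma,\tau)$ pairs survive. The paper's proof is shorter precisely because it invokes only the crude estimate of Lemma~\ref{lem:Wg-asymp} to discard every pair with $\sigma\ne\tau$ at once, and then among the surviving diagonal terms $\sigma=\tau$ restricts to $\sigma\in\mathfrak{S}_p\times\mathfrak{S}_q$ with $\kappa(\pr_1(\sigma))$ and $\kappa(\pr_2(\sigma))$ maximal. It never confronts the geodesic combinatorics you anticipate in your last paragraph.

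Your caution there is justified, and it points to an actual defect in the paper's statement and proof rather than a gap in your plan. The bound $\Wg_{mn}(\rho)=O\big((mn)^{-p-q-1}\big)$ for $\rho\ne\mathrm{id}_{p+q}$ does not suffice to kill the off-diagonal terms: a pair with $\sigma\ne\tau$ contributes at order $m^{\kappa_1(\sigma)+\kappa(\tau)-p-q-1}$, and this can equal the claimed leading order $m^{s(\bm i,\bm k)+2s(\bm j,\bm l)-p-q}$ whenever $s(\bm i,\bm k)+s(\bm j,\bm l)<p+q$. Concretely, take $q=0$, $p=2$, $\bm i=(1,2)$, $\bm k=(2,1)$, so that only $\sigma=(1\ 2)$ satisfies $\delta_\sigma(\bm i,\bm k)=1$ and $s(\bm i,\bm k)=1$. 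A direct computation from Theorem~\ref{thm:projection_distribution} and \eqref{Wg_U_11}--\eqref{Wg_U_2} gives
\[
\mathbb{E}\big[\pi_1(H^{(m)})_{1,2}\,\pi_1(H^{(m)})_{2,1}\big]
=\frac{mn^{2}}{(mn)^{2}-1}\Big(M_2\big(\mu[\lambda^{(m)}]\big)-M_1\big(\mu[\lambda^{(m)}]\big)^{2}\Big)
\sim m^{-1}\big(M_2(\mu)-M_1(\mu)^{2}\big),
\]
whereas \eqref{eqn:asymp-m} predicts $m^{-1}M_2(\mu)$. The discrepancy is exactly the $(\sigma,\tau)=\big((1\ 2),\mathrm{id}_2\big)$ term that the paper discards. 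In general, for a maximal $\sigma$ the inner sum $\sum_{\tau}\Wg_{mn}(\sigma^{-1}\tau)(mn)^{\kappa(\tau)}M_\tau$ is asymptotic to $(mn)^{\kappa(\sigma)-(p+q)}$ times the product of \emph{free} cumulants of $\mu$ over the cycles of $\sigma$, exactly as you suspected. The stated formula happens to be correct in the diagonal case $\bm i=\bm k$, $\bm j=\bm l$ (where every cycle has length $1$ and the first free cumulant equals $M_1(\mu)$), which is the only case used downstream in Corollary~\ref{cor:largest-growth-m} and Theorem~\ref{thm:fd-marg-limit}; outside that case, $M_\sigma(\mu)M_\tau(\mu)$ in \eqref{eqn:asymp-m} must be replaced by the corresponding product of free cumulants. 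So there is no way to ``reconcile'' your free-cumulant constant with the moment product in the statement: your obstacle is genuine, and the paper's shortcut does not avoid it --- it simply gets the answer wrong.
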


\begin{proof}
We use Theorem \ref{thm:projection_distribution}:
\begin{multline*}
\mathbb{E} \left[ 
\prod_{\alpha=1}^p \pi_1(H^{(m)})_{i_\alpha, k_\alpha} \cdot 
\prod_{\beta=1}^q \pi_2(H^{(m)})_{j_\beta, l_\beta}
\right] \\
= \sum_{\sigma ,\tau \in \mathfrak{S}_{p+q}} 
\delta_{\pr_1 (\sigma)} (\bm{i},\bm{k}) \, \delta_{\pr_2 (\sigma)} (\bm{j},\bm{l}) \,
n^{\kappa_2(\sigma)} m^{\kappa_1(\sigma)} \Wg_{mn}(\sigma^{-1}\tau) \Tr_\tau(\lambda^{(m)}).
\end{multline*}
We pick up the leading term in the limit $m \to \infty$. 
From Lemma \ref{lem:Wg-asymp}, we have 
\[
\Wg_{mn}(\sigma^{-1}\tau) = 
\begin{cases}
(mn)^{-p-q} (1+O(m^{-1})) & \text{if $\sigma=\tau$}, \\
O(m^{-p-q-1}) & \text{otherwise.}
\end{cases}
\]
Observe that for $\sigma$ such that $\pr_2(\sigma)(\bm{j}) = \bm{l}$ we have
\[
m^{\kappa_1(\sigma)} = 
\begin{cases}
m^{s(\bm{j}, \bm{l})} & \text{if $\sigma \in \mathfrak{S}_p \times \mathfrak{S}_q$ and $\kappa(\pr_2(\sigma))$ is maximal}, \\
O(m^{s(\bm{j}, \bm{l})-1}) & \text{otherwise},
\end{cases}
\]
where we regard $\mathfrak{S}_p \times \mathfrak{S}_q$ as a subgroup of $\mathfrak{S}_{p+q}$ in the usual way.  For $\sigma = (\sigma_1, \sigma_2) \in \mathfrak{S}_p \times \mathfrak{S}_q$, we have $\kappa_1(\sigma) = \kappa(\sigma_2),$ $\kappa_2(\sigma) = \kappa(\sigma_1)$, and $\Tr_\sigma(\la^{(m)}) = \Tr_{\sigma_1}(\la^{(m)})\Tr_{\sigma_2}(\la^{(m)})$.  Therefore we see that, as $m \to \infty$,
\begin{align} 
\begin{split} \label{eqn:asymp-m-deriv}
\mathbb{E} &\left[ 
\prod_{\alpha=1}^p \pi_1(H^{(m)})_{i_\alpha, k_\alpha} \cdot 
\prod_{\beta=1}^q \pi_2(H^{(m)})_{j_\beta, l_\beta}
\right] \\
&= \sum_{\sigma \in \mathfrak{S}_{p}} \sum_{\substack{ \tau \in \mathfrak{S}_{q} \\ \kappa(\tau) = s(\bm{j}, \bm{l})}} 
\delta_{\sigma} (\bm{i},\bm{k}) \, \delta_\tau (\bm{j},\bm{l}) \,
n^{\kappa_2((\sigma, \tau))} m^{s(\bm{j}, \bm{l})} (mn)^{-p-q} \Tr_{(\sigma, \tau)} (\lambda^{(m)}) \cdot \big(1 + O(m^{-1}) \big) \\
&= \left[ 
\sum_{\sigma \in \mathfrak{S}_{p}} \sum_{\substack{ \tau \in \mathfrak{S}_{q} \\ \kappa(\tau) = s(\bm{j}, \bm{l})}} \delta_\sigma (\bm{i},\bm{k}) \, \delta_\tau (\bm{j},\bm{l}) \, n^{\kappa(\sigma)-(p+q)} \, \Tr_\sigma(\la^{(m)}) \Tr_\tau(\lambda^{(m)}) \right] m^{s(\bm{j}, \bm{l})-(p+q)} \cdot \big(1 + O(m^{-1}) \big) \\
&= \left[ \sum_{\sigma \in \mathfrak{S}_{p}} \sum_{\substack{ \tau \in \mathfrak{S}_{q} \\ \kappa(\tau) = s(\bm{j}, \bm{l})}} \delta_\sigma (\bm{i},\bm{k}) \, \delta_\tau (\bm{j},\bm{l}) \, n^{\kappa(\sigma)-(p+q)} \, (mn)^{\kappa(\sigma)} M_\sigma(\mu[\la^{(m)}]) \, (mn)^{s(\bm{j}, \bm{l})} M_\tau(\mu[\lambda^{(m)}]) \right] \\
& \qquad \qquad \qquad \qquad \qquad \qquad \qquad \qquad \qquad \qquad \qquad \qquad \qquad \qquad \cdot m^{s(\bm{j}, \bm{l})-(p+q)} \cdot \big(1 + O(m^{-1}) \big),
\end{split}
\end{align}
where in the last equality we have used Lemma \ref{lem:trace-moments}. Again discarding lower-order terms in $m$, we find that the outer sum can be restricted only to $\sigma \in \mathfrak{S}_{p}$ with $\kappa(\sigma) = s(\bm{i}, \bm{k})$, giving
\begin{align*}
\mathbb{E} &\left[ 
\prod_{\alpha=1}^p \pi_1(H^{(m)})_{i_\alpha, k_\alpha} \cdot 
\prod_{\beta=1}^q \pi_2(H^{(m)})_{j_\beta, l_\beta}
\right] \\
&= \left[ \sum_{\substack{\sigma \in \mathfrak{S}_{p} \\ \kappa(\sigma) = s(\bm{i}, \bm{k})}} \sum_{\substack{ \tau \in \mathfrak{S}_{q} \\ \kappa(\tau) = s(\bm{j}, \bm{l})}} \delta_\sigma (\bm{i},\bm{k}) \, \delta_\tau (\bm{j},\bm{l}) \,  M_\sigma(\mu[\la^{(m)}]) \, M_\tau(\mu[\lambda^{(m)}]) \right] n^{2 s(\bm{i}, \bm{k}) + s(\bm{j}, \bm{l}) -(p+q)} \\
& \qquad \qquad \qquad \qquad \qquad \qquad \qquad \qquad \qquad \qquad \qquad \qquad \cdot m^{s(\bm{i}, \bm{k}) + 2s(\bm{j}, \bm{l})-(p+q)} \cdot \big(1 + O(m^{-1}) \big).
\end{align*}
The desired result then follows from the assumption that $\mu[\lambda^{(m)}]$ converges weakly to $\mu$, so that as $m \to \infty$, $M_k(\mu[\lambda^{(m)}]) \to M_k(\mu)$ for all $k \ge 1$.
\end{proof}

\begin{example}
If $p = 0$, the group $\mathfrak{S}_p$ is trivial, and (\ref{eqn:asymp-m}) gives
\begin{equation} \label{eqn:asymp-m-p0}
\mathbb{E} \left[
\prod_{\beta=1}^q \pi_2(H^{(m)})_{j_\beta, l_\beta}
\right] 
=  \left[ \sum_{\substack{ \tau \in \mathfrak{S}_{q} \\ \kappa(\tau) = s(\bm{j}, \bm{l})}} \delta_\tau (\bm{j},\bm{l}) \, M_\tau(\mu) \right] n^{s(\bm{j}, \bm{l})-q} \, m^{2s(\bm{j}, \bm{l})-q} \cdot \big(1 + o(1) \big).
\end{equation}
Similarly, for $q=0$, we have
\begin{equation} \label{eqn:asymp-m-q0}
\mathbb{E} \left[ 
\prod_{\alpha=1}^p \pi_1(H^{(m)})_{i_\alpha, k_\alpha}
\right] \\
=  \left[ 
\sum_{\substack{ \sigma \in \mathfrak{S}_{p} \\ \kappa(\sigma) = s(\bm{i}, \bm{k})}} \delta_\sigma (\bm{i},\bm{k}) \, M_\sigma(\mu) \right] n^{2s(\bm{i}, \bm{k})-p} \, m^{s(\bm{i}, \bm{k})-p} \cdot \big(1 + o(1) \big).
\end{equation}
\end{example}

\begin{cor} \label{cor:largest-growth-m}
The fastest possible asymptotic growth of the mixed moment in Theorem \ref{thm:asymp-m}, with $p$ and $q$ fixed, occurs when $\bm{i} = \bm{k}$ and $\bm{j} = \bm{l}$.  In this case, as $m \to \infty$, we have
\begin{equation} \label{eqn:asymp-m-jl}
\mathbb{E} \left[ 
\prod_{\alpha=1}^p \pi_1(H^{(m)})_{i_\alpha, k_\alpha} \cdot 
\prod_{\beta=1}^q \pi_2(H^{(m)})_{j_\beta, j_\beta}
\right]
=  M_1(\mu)^{p+q} \, n^{p} \, m^{q} \cdot \big(1 + o(1) \big).
\end{equation}
\end{cor}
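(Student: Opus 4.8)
The plan is to read the exponents of $m$ and $n$ off Theorem \ref{thm:asymp-m} and optimize over the index data. Since $n$ is held fixed and $m \to \infty$, the asymptotic growth rate of the mixed moment in (\ref{eqn:asymp-m}) is governed entirely by the exponent of $m$, namely $s(\bm{i},\bm{k}) + 2s(\bm{j},\bm{l}) - p - q$, provided the bracketed sum in (\ref{eqn:asymp-m}) does not vanish; if it vanishes the moment is of strictly smaller order and so cannot give the fastest growth. Hence the first step is to bound $s(\bm{i},\bm{k})$ and $s(\bm{j},\bm{l})$ from above and characterize the extremal cases.

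For any $\sigma \in \mathfrak{S}_p$ one has $\kappa(\sigma) \le p$, with equality precisely when every letter is a fixed point, i.e.\ $\sigma = \mathrm{id}_p$. Since $\mathrm{id}_p(\bm{i}) = \bm{i}$, it follows that $s(\bm{i},\bm{k}) \le p$, with equality if and only if $\bm{i} = \bm{k}$; likewise $s(\bm{j},\bm{l}) \le q$ with equality if and only if $\bm{j} = \bm{l}$. Therefore the exponent of $m$ satisfies
\[
 s(\bm{i},\bm{k}) + 2s(\bm{j},\bm{l}) - p - q \;\le\; p + 2q - p - q \;=\; q,
\]
and the maximal value $q$ is attained exactly when $\bm{i} = \bm{k}$ and $\bm{j} = \bm{l}$. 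This proves that the fastest asymptotic growth in $m$ occurs for this choice of indices; one should note here that because $n$ is fixed, a configuration with a larger power of $n$ but a smaller power of $m$ still grows strictly more slowly, so maximizing the exponent of $m$ alone is indeed the correct optimization.

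It then remains to evaluate (\ref{eqn:asymp-m}) in the case $\bm{i} = \bm{k}$, $\bm{j} = \bm{l}$, where $s(\bm{i},\bm{k}) = p$ and $s(\bm{j},\bm{l}) = q$. In the outer sums, the constraint $\kappa(\sigma) = s(\bm{i},\bm{k}) = p$ forces $\sigma = \mathrm{id}_p$, and similarly $\tau = \mathrm{id}_q$; we have $\delta_{\mathrm{id}_p}(\bm{i},\bm{i}) = \delta_{\mathrm{id}_q}(\bm{j},\bm{j}) = 1$, and since $\mathrm{id}_p$ consists of $p$ cycles of length $1$, $M_{\mathrm{id}_p}(\mu) = M_1(\mu)^p$, and likewise $M_{\mathrm{id}_q}(\mu) = M_1(\mu)^q$. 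The exponents become $2s(\bm{i},\bm{k}) + s(\bm{j},\bm{l}) - p - q = p$ for $n$ and $s(\bm{i},\bm{k}) + 2s(\bm{j},\bm{l}) - p - q = q$ for $m$, so substituting into (\ref{eqn:asymp-m}) gives $M_1(\mu)^{p+q}\, n^p\, m^q\,(1 + o(1))$, which is (\ref{eqn:asymp-m-jl}). There is no substantial obstacle in this argument: the only point needing care is the comparison of growth rates across different index configurations, and everything else is bookkeeping with the exponents supplied by Theorem \ref{thm:asymp-m}.
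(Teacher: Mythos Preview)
Your proof is correct and follows the same approach as the paper's own argument: both observe that $s(\bm{i},\bm{k}) \le p$ and $s(\bm{j},\bm{l}) \le q$ with equality precisely when $\bm{i}=\bm{k}$ and $\bm{j}=\bm{l}$, and then evaluate (\ref{eqn:asymp-m}) at this extremal choice where only $\sigma=\mathrm{id}_p$, $\tau=\mathrm{id}_q$ survive. Your version is in fact more careful than the paper's, since you explicitly isolate the exponent of $m$ as the relevant quantity (given that $n$ is fixed) and note the caveat about the bracketed prefactor possibly vanishing.
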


\begin{proof}
When $\bm{i} = \bm{k}$ and $\bm{j} = \bm{l}$, we have $s(\bm{i},\bm{k}) = p$ and $s(\bm{j},\bm{l}) = q$, the largest values these quantities can take.  In this case the sums in (\ref{eqn:asymp-m}) include only the term with $\sigma = \mathrm{id}_p$ and $\tau = \mathrm{id}_q$, with $M_\sigma(\mu) = M_1(\mu)^p$ and $M_\tau(\mu) = M_1(\mu)^q$, so that (\ref{eqn:asymp-m}) simplifies to (\ref{eqn:asymp-m-jl}).
\end{proof}

The above results imply the following concentration theorem, analogous to a law of large numbers for the scaled marginal $\pi_2(m^{-1} H^{(m)})$.

\begin{thm} \label{thm:fd-marg-limit}
In the setting of Theorem \ref{thm:asymp-m}, the random matrix $\pi_2(m^{-1} H^{(m)})$ converges in probability to the deterministic value $M_1(\mu) I_n$, where $I_n$ is the $n \times n$ identity matrix.
\end{thm}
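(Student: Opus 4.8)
The plan is to reduce convergence in probability of the matrix to entrywise $L^{2}$-convergence. Since $n$ is fixed, the space of $n\times n$ complex matrices is finite-dimensional, so by a union bound over its $n^{2}$ entries, convergence in probability of $\pi_{2}(m^{-1}H^{(m)})$ (in any fixed matrix norm) follows once each entry $X^{(m)}_{jl}:=m^{-1}\pi_{2}(H^{(m)})_{jl}$ converges in probability to $\delta(j,l)\,M_{1}(\mu)$; and $L^{2}$-convergence of an entry implies its convergence in probability. So it suffices to prove $X^{(m)}_{jl}\to\delta(j,l)\,M_{1}(\mu)$ in $L^{2}$ for all $j,l\in\{1,\dots,n\}$.

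\textbf{First moment.} The $p=0$, $q=1$ specialization recorded in the Examples after Theorem~\ref{thm:projection_distribution} gives $\mathbb{E}[\pi_{2}(H^{(m)})_{jl}]=\delta(j,l)\,n^{-1}\Tr(\lambda^{(m)})$, and Lemma~\ref{lem:trace-moments} applied to the single fixed point $\mathrm{id}_{1}\in\mathfrak{S}_{1}$ gives $\Tr(\lambda^{(m)})=mn\,M_{1}(\mu[\lambda^{(m)}])$. Hence $\mathbb{E}[X^{(m)}_{jl}]=\delta(j,l)\,M_{1}(\mu[\lambda^{(m)}])\to\delta(j,l)\,M_{1}(\mu)$, using that $\mu[\lambda^{(m)}]$ converges weakly to $\mu$.

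\textbf{Second moment.} Since $\pi_{2}(H^{(m)})$ is Hermitian, $|\pi_{2}(H^{(m)})_{jl}|^{2}=\pi_{2}(H^{(m)})_{jl}\,\pi_{2}(H^{(m)})_{lj}$, which is a mixed moment of the form appearing in~(\ref{eqn:asymp-m-p0}) with $q=2$, $\bm{j}=(j,l)$, and $\bm{l}=(l,j)$. The one combinatorial input is $s(\bm{j},\bm{l})=\max\{\kappa(\tau):\tau\in\mathfrak{S}_{2},\ \delta_{\tau}(\bm{j},\bm{l})=1\}$, which equals $2$ when $j=l$ (attained by $\mathrm{id}_{2}$) and $1$ when $j\ne l$ (attained only by the transposition $(1\ 2)$). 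Plugging into~(\ref{eqn:asymp-m-p0}): for $j=l$ only $\tau=\mathrm{id}_{2}$ survives, with $M_{\tau}(\mu)=M_{1}(\mu)^{2}$, giving $\mathbb{E}[(\pi_{2}(H^{(m)})_{jj})^{2}]=M_{1}(\mu)^{2}\,m^{2}(1+o(1))$ — alternatively this is immediate from Corollary~\ref{cor:largest-growth-m}; for $j\ne l$ only $\tau=(1\ 2)$ survives, with $M_{\tau}(\mu)=M_{2}(\mu)$, giving $\mathbb{E}[|\pi_{2}(H^{(m)})_{jl}|^{2}]=M_{2}(\mu)\,n^{-1}(1+o(1))=O(1)$. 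Dividing by $m^{2}$: $\mathbb{E}[(X^{(m)}_{jj})^{2}]\to M_{1}(\mu)^{2}$ and $\mathbb{E}[|X^{(m)}_{jl}|^{2}]\to 0$ for $j\ne l$. Combined with the first-moment estimate, for $j=l$ we get $\mathbb{E}[(X^{(m)}_{jj}-M_{1}(\mu))^{2}]\to M_{1}(\mu)^{2}-2M_{1}(\mu)^{2}+M_{1}(\mu)^{2}=0$, and for $j\ne l$ we get $\mathbb{E}[|X^{(m)}_{jl}|^{2}]\to 0$. Both are $L^{2}$-convergence to $\delta(j,l)\,M_{1}(\mu)$, which completes the argument.

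I do not expect a genuine obstacle: the only care required is to read the leading order off~(\ref{eqn:asymp-m-p0}) correctly in the $q=2$ case — in particular to compute $s(\bm{j},\bm{l})$ for the two index patterns $(j,j)$ and $(j,l)$ with $j\ne l$ — and to note that the hypotheses of Theorem~\ref{thm:asymp-m} (uniform moment bounds, weak convergence) are exactly what turn the $o(1)$ remainders into genuine errors and make $M_{k}(\mu[\lambda^{(m)}])\to M_{k}(\mu)$.
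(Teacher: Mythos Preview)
Your proof is correct and follows essentially the same approach as the paper's: both arguments compute the first and second moments of the entries of $\pi_2(m^{-1}H^{(m)})$ from the asymptotic formula~(\ref{eqn:asymp-m-p0}) (and Corollary~\ref{cor:largest-growth-m}), deduce that the mean converges to $\delta(j,l)\,M_1(\mu)$ and the variance to $0$, and then conclude via Chebyshev/$L^2$-convergence. Your write-up is slightly more explicit in working out $s(\bm{j},\bm{l})$ in the two cases $j=l$ and $j\ne l$, but this is only a difference in presentation, not in method.
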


\begin{proof}
From (\ref{eqn:asymp-m-p0}) and Corollary \ref{cor:largest-growth-m}, for $\bm{j}=(j_1, j_2, \dots, j_q) \in \{1,2,\dots,n\}^{\times q}$ we find
\[
\mathbb{E} \left[
\prod_{\beta=1}^q \pi_2(m^{-1}H^{(m)})_{j_\beta, j_\beta}
\right] 
=  M_1(\mu)^q + o(1)
\]
as $m \to \infty$, where we have used the linearity of both $\pi_2$ and $\mathbb{E}$.  All other mixed moments of entries of $\pi_2(m^{-1}H^{(m)})$ are $o(1)$.  These statements imply the limits
\begin{align*}
\mathbb{E}[\pi_2(m^{-1} H^{(m)})_{j, l}] &\to M_1(\mu) \, \delta(j,l), \\
\mathrm{Var}[\pi_2(m^{-1} H^{(m)})_{j, l}] &\to 0
\end{align*}
for $j,l = 1, \hdots, n$.  The theorem then follows from Chebyshev's inequality.
\end{proof}

\begin{remark} \label{rem:asymp-m-phys}
The assumptions of Theorem \ref{thm:asymp-m} on convergence of moments of the empirical measures are natural from the point of view of random matrix theory.  However, density matrices of physical quantum mechanical systems are constrained to be positive semidefinite and have trace 1.  Accordingly, in physical applications, it is more realistic instead to make the much stronger assumptions
\begin{align}
	\nonumber \Tr(\diag(\lambda^{(m)})) &= 1, \\
	\label{eqn:asymp-m-phys-assump} \Tr(\diag(\lambda^{(m)})^k) &\to T_k \ge 0 \ \textrm{ as $m \to \infty$,}
\end{align}
with a uniform bound $|\Tr(\diag(\lambda^{(m)})^k)| < C$ for all $m$ and $k$.  It is easy to see that these assumptions imply that the emprical measures $\mu[\lambda^{(m)}]$ converge in distribution to a Dirac mass at 0.  However, the results above are easily adjusted to this setting.  In particular, in place of (\ref{eqn:asymp-m}), we find
\begin{multline} \label{eqn:asymp-m-phys}
\mathbb{E} \left[ 
\prod_{\alpha=1}^p \pi_1(H^{(m)})_{i_\alpha, k_\alpha} \cdot 
\prod_{\beta=1}^q \pi_2(H^{(m)})_{j_\beta, l_\beta}
\right] \\
=  \left[ 
\sum_{\sigma \in \mathfrak{S}_{p}} \sum_{\substack{ \tau \in \mathfrak{S}_{q} \\ \kappa(\tau) = s(\bm{j}, \bm{l})}} \delta_\sigma (\bm{i},\bm{k}) \, \delta_\tau (\bm{j},\bm{l}) \, n^{\kappa(\sigma)-p-q} \, T_\sigma T_\tau \right] m^{s(\bm{j}, \bm{l})-p-q} \cdot \big(1 + o(1) \big),
\end{multline}
where for $\sigma \in \mathfrak{S}_d$ with cycle type $(c_1, \hdots, c_{\kappa(\sigma)})$, we define $T_\sigma = \prod_{j=1}^{\kappa(\sigma)} T_{c_j}$. The asymptotic formula (\ref{eqn:asymp-m-phys}) follows directly from the third line of (\ref{eqn:asymp-m-deriv}).  Note that here we \emph{cannot} restrict the sum over $\mathfrak{S}_{p}$ to include only $\sigma$ with $\kappa(\sigma) = s(\bm{i},\bm{k})$.

Additionally, in this setting all moments of the entries of the marginals are $O(1)$ as $m \to \infty$.  For fixed $p$, the slowest-decaying moments are those for which $\bm{j} = \bm{l}$, irrespective of whether $\bm{i} = \bm{k}$.  In place of (\ref{eqn:asymp-m-jl}) we find
\begin{multline} \label{eqn:asymp-m-jl-phys}
\mathbb{E} \left[ 
\prod_{\alpha=1}^p \pi_1(H^{(m)})_{i_\alpha, k_\alpha} \cdot 
\prod_{\beta=1}^q \pi_2(H^{(m)})_{j_\beta, j_\beta}
\right] \\
=  \left[ 
\sum_{\sigma \in \mathfrak{S}_{p}} \delta_\sigma (\bm{i},\bm{k}) \, n^{\kappa(\sigma)-p-q} \, T_\sigma \right] m^{-p} \cdot \big(1 + o(1) \big),
\end{multline}
which implies that in this case the marginal $\pi_2(H^{(m)})$ itself, rather than its rescaling $\pi_2(m^{-1}H^{(m)})$, converges in probability to the deterministic value $n^{-1} I_n$.
\end{remark}

\subsection{Large $m$ and $n$}

Here we study the limit as $m$ and $n$ both grow large with their ratio fixed.  That is, we send $n \to \infty$ with $m = \lceil cn \rceil$ for $c > 0$.

\begin{thm} \label{thm:asymp-n}
Let $p, q$ be nonnegative integers and fix some $c > 0$. Take sequences of indices
\begin{align*}
\bm{i}=(i_1,i_2,\dots,i_p), \ \bm{k}=(k_1,k_2,\dots,k_p)  \quad \in \{1,2,\dots\}^{\times p} &, \\ 
\bm{j}=(j_1, j_2, \dots, j_q), \ \bm{l}=(l_1,l_2,\dots,l_q)  \quad \in \{1,2,\dots\}^{\times q}. &
\end{align*}
Set $m = m(n) = \lceil cn \rceil$ for $n = 1, 2, \hdots$, and take $\mathfrak{n}$ sufficiently large that
\[
     j_1, \hdots, j_q, l_1, \hdots, l_q \le \mathfrak{n} \qquad \textrm{and} \qquad i_1, \hdots, i_p, k_1, \hdots, k_p \le m(\mathfrak{n}).
\]
Fix a sequence $(\lambda^{(n)})_{n \ge \mathfrak{n}}$, where $\lambda^{(n)} \in \mathbb{R}^{mn}$.  Set $H^{(n)} = U(n) \diag(\lambda^{(n)}) U(n)^\dagger$, where $U(n)$ is an $mn \times mn$ Haar unitary. Suppose that the measures $\mu[\lambda^{(n)}]$ all have support contained in some bounded interval $[-K, K] \subset \bR$, satisfy a uniform bound $|M_k(\mu[\lambda^{(n)}])| < C$ for all $k$ and $n$, and converge weakly to a probability measure $\mu$. Then as $n \to \infty$, 
\begin{multline} \label{eqn:asymp-n}
\mathbb{E} \left[ 
\prod_{\alpha=1}^p \pi_1(H^{(n)})_{i_\alpha, k_\alpha} \cdot 
\prod_{\beta=1}^q \pi_2(H^{(n)})_{j_\beta, l_\beta}
\right] \\
=  \left[ \sum_{\substack{ \sigma \in \mathfrak{S}_{p} \\ \kappa(\sigma) = s(\bm{i}, \bm{k})}} \sum_{\substack{ \tau \in \mathfrak{S}_{q} \\ \kappa(\tau) = s(\bm{j}, \bm{l})}} \delta_\sigma (\bm{i},\bm{k}) \, \delta_\tau (\bm{j},\bm{l}) \, M_\sigma(\mu) M_\tau(\mu) \right] c^{s(\bm{i}, \bm{k}) + 2s(\bm{j}, \bm{l})-p-q} \\
\cdot  n^{3(s(\bm{i}, \bm{k})+ s(\bm{j}, \bm{l})) - 2(p + q)} \cdot \big(1 + o(1) \big).
\end{multline}
\end{thm}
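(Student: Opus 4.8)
The plan is to mirror the proof of Theorem~\ref{thm:asymp-m}, the only structural change being that now $m=m(n)=\lceil cn\rceil$, so that $m/n\to c$, hence $m=cn(1+o(1))$ and $mn=cn^{2}(1+o(1))\to\infty$. First I would substitute the exact identity of Theorem~\ref{thm:projection_distribution},
\[
\mathbb{E}\!\left[\prod_{\alpha=1}^{p}\pi_{1}(H^{(n)})_{i_{\alpha},k_{\alpha}}\prod_{\beta=1}^{q}\pi_{2}(H^{(n)})_{j_{\beta},l_{\beta}}\right]=\sum_{\sigma,\tau\in\mathfrak{S}_{p+q}}\delta_{\pr_{1}(\sigma)}(\bm{i},\bm{k})\,\delta_{\pr_{2}(\sigma)}(\bm{j},\bm{l})\,n^{\kappa_{2}(\sigma)}m^{\kappa_{1}(\sigma)}\,\Wg_{mn}(\sigma^{-1}\tau)\,\Tr_{\tau}(\lambda^{(n)}),
\]
and apply Lemma~\ref{lem:Wg-asymp} with $N=mn$: the Weingarten factor equals $(mn)^{-(p+q)}(1+O(n^{-1}))$ when $\sigma=\tau$ and is $O((mn)^{-(p+q)-1})=O(n^{-2(p+q)-2})$ otherwise, so the leading behaviour comes from the diagonal $\sigma=\tau$. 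On the diagonal, Lemma~\ref{lem:trace-moments} turns $\Tr_{\sigma}(\lambda^{(n)})$ into $(mn)^{\kappa(\sigma)}M_{\sigma}(\mu[\lambda^{(n)}])$, reducing the problem to estimating
\[
\sum_{\sigma\in\mathfrak{S}_{p+q}}\delta_{\pr_{1}(\sigma)}(\bm{i},\bm{k})\,\delta_{\pr_{2}(\sigma)}(\bm{j},\bm{l})\;n^{\kappa_{2}(\sigma)}m^{\kappa_{1}(\sigma)}(mn)^{\kappa(\sigma)-(p+q)}\,M_{\sigma}(\mu[\lambda^{(n)}])\,(1+O(n^{-1})).
\]

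I would then single out the dominant permutations via an exponent count. Classifying the cycles of $\sigma$ as in the key lemma following Theorem~\ref{thm:projection_distribution}, write $\kappa(\sigma)=a+b+e$, where $a=\kappa_{2}(\sigma)$ is the number of cycles inside $\{1,\dots,p\}$, $b=\kappa_{1}(\sigma)$ the number inside $\{p+1,\dots,p+q\}$, and $e$ the number of cycles meeting both blocks; then $\kappa(\pr_{1}(\sigma))=a+e$ and $\kappa(\pr_{2}(\sigma))=b+e$. Since $m=cn(1+o(1))$, the $\sigma$-term carries $n$ to the power $\kappa_{2}(\sigma)+\kappa_{1}(\sigma)+2(\kappa(\sigma)-(p+q))=3a+3b+2e-2(p+q)$ and $c$ to the power $a+2b+e-(p+q)$. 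Nonvanishing of $\delta_{\pr_{1}(\sigma)}(\bm{i},\bm{k})$ forces $a+e=\kappa(\pr_{1}(\sigma))\le s(\bm{i},\bm{k})$ and likewise $b+e\le s(\bm{j},\bm{l})$, whence
\[
3a+3b+2e=3(a+e)+3(b+e)-4e\le 3\bigl(s(\bm{i},\bm{k})+s(\bm{j},\bm{l})\bigr),
\]
with equality precisely when $e=0$ — equivalently $\sigma=(\sigma_{1},\sigma_{2})\in\mathfrak{S}_{p}\times\mathfrak{S}_{q}$ — together with $\kappa(\sigma_{1})=s(\bm{i},\bm{k})$ and $\kappa(\sigma_{2})=s(\bm{j},\bm{l})$. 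For such $\sigma$ one has $\kappa_{2}(\sigma)=\kappa(\sigma_{1})$, $\kappa_{1}(\sigma)=\kappa(\sigma_{2})$, $M_{\sigma}=M_{\sigma_{1}}M_{\sigma_{2}}$ and $\delta_{\pr_{1}(\sigma)}(\bm{i},\bm{k})\delta_{\pr_{2}(\sigma)}(\bm{j},\bm{l})=\delta_{\sigma_{1}}(\bm{i},\bm{k})\delta_{\sigma_{2}}(\bm{j},\bm{l})$; substituting and using $mn=cn^{2}(1+o(1))$ brings out exactly the factor $c^{\,s(\bm{i},\bm{k})+2s(\bm{j},\bm{l})-p-q}\,n^{\,3(s(\bm{i},\bm{k})+s(\bm{j},\bm{l}))-2(p+q)}$ in front of $\sum\delta_{\sigma_{1}}(\bm{i},\bm{k})\,\delta_{\sigma_{2}}(\bm{j},\bm{l})\,M_{\sigma_{1}}(\mu[\lambda^{(n)}])M_{\sigma_{2}}(\mu[\lambda^{(n)}])$, the sums running over $\sigma_{1}\in\mathfrak{S}_{p}$ with $\kappa(\sigma_{1})=s(\bm{i},\bm{k})$ and $\sigma_{2}\in\mathfrak{S}_{q}$ with $\kappa(\sigma_{2})=s(\bm{j},\bm{l})$. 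Finally, since the $\mu[\lambda^{(n)}]$ are supported in $[-K,K]$, weak convergence to $\mu$ is equivalent to $M_{k}(\mu[\lambda^{(n)}])\to M_{k}(\mu)$ for every $k$, which replaces the empirical moments by moments of $\mu$ and yields \eqref{eqn:asymp-n}.

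The step I expect to require the most care is certifying that every discarded contribution is of strictly lower order in $n$ than $n^{\,3(s(\bm{i},\bm{k})+s(\bm{j},\bm{l}))-2(p+q)}$. For the diagonal terms this is immediate from the display above, since any crossing cycle ($e\ge1$) or any deficit $\kappa(\pr_{1}(\sigma))<s(\bm{i},\bm{k})$ or $\kappa(\pr_{2}(\sigma))<s(\bm{j},\bm{l})$ strictly decreases the exponent $3a+3b+2e$. For the off-diagonal terms $\sigma\ne\tau$, which carry the additional Weingarten suppression $(mn)^{-1}$, one argues exactly as in the chain of estimates \eqref{eqn:asymp-m-deriv}, using the uniform bound $|M_{k}(\mu[\lambda^{(n)}])|<C$ (hence $|M_{\sigma}(\mu[\lambda^{(n)}])|\le C^{p+q}$) together with the finiteness of $\mathfrak{S}_{p+q}$ to collapse the whole remainder into a single $1+o(1)$ factor. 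Aside from this bookkeeping, the proof is a routine transcription of that of Theorem~\ref{thm:asymp-m}, with $m$ and $n$ now linked by $m=\lceil cn\rceil$.
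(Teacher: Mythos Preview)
Your proposal is correct and follows essentially the same approach as the paper: apply Theorem~\ref{thm:projection_distribution}, use Lemma~\ref{lem:Wg-asymp} to reduce to the diagonal $\sigma=\tau$, convert $\Tr_\sigma$ to moments via Lemma~\ref{lem:trace-moments}, and then identify the $\sigma$ with maximal $n$-exponent as those in $\mathfrak{S}_p\times\mathfrak{S}_q$ with $\kappa(\sigma_1)=s(\bm{i},\bm{k})$, $\kappa(\sigma_2)=s(\bm{j},\bm{l})$. Your explicit $(a,b,e)$ bookkeeping and the inequality $3a+3b+2e=3(a+e)+3(b+e)-4e\le 3(s(\bm{i},\bm{k})+s(\bm{j},\bm{l}))$ in fact spells out this last step more carefully than the paper, which simply asserts that the leading contribution comes from maximizing $2\kappa(\sigma)+\kappa_1(\sigma)+\kappa_2(\sigma)$.
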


\begin{proof}
The proof is similar to Theorem \ref{thm:asymp-m}.  We again use Theorem \ref{thm:projection_distribution}, along with the fact that in this limiting regime, by Lemma \ref{lem:Wg-asymp},
\[
\Wg_{mn}(\sigma^{-1}\tau) = 
\begin{cases}
(cn^2)^{-p-q} (1+O(n^{-1})) & \text{if $\sigma=\tau$}, \\
O(n^{-2(p+q+1)}) & \text{otherwise},
\end{cases}
\]
to obtain
\begin{multline} \label{eqn:asymp-n-deriv}
\mathbb{E} \left[ 
\prod_{\alpha=1}^p \pi_1(H^{(n)})_{i_\alpha, k_\alpha} \cdot 
\prod_{\beta=1}^q \pi_2(H^{(n)})_{j_\beta, l_\beta}
\right] \\
= \sum_{\sigma \in \mathfrak{S}_{p+q}} 
\delta_{\pr_1 (\sigma)} (\bm{i},\bm{k}) \delta_{\pr_2 (\sigma)} (\bm{j},\bm{l})
n^{\kappa_2(\sigma)} (cn)^{\kappa_1(\sigma)} (cn^2)^{-p-q} \Tr_\sigma(\lambda^{(n)}) \cdot \big( 1+O(n^{-1}) \big) \\
= \sum_{\sigma \in \mathfrak{S}_{p+q}} 
\delta_{\pr_1 (\sigma)} (\bm{i},\bm{k}) \delta_{\pr_2 (\sigma)} (\bm{j},\bm{l}) \, c^{\kappa(\sigma) + \kappa_1(\sigma) -p-q} \, n^{2 \kappa(\sigma) + \kappa_2(\sigma) + \kappa_1(\sigma) -2(p+q)} M_\sigma(\mu[\lambda^{(n)}]) \cdot \big( 1+O(n^{-1}) \big),
\end{multline}
where in the last equality we have used Lemma \ref{lem:trace-moments}.  The leading-order contribution comes from $\sigma \in \mathfrak{S}_{p+q}$ such that $\kappa_1(\sigma) + \kappa_2(\sigma)$ is maximized among $\sigma$ satisfying $\pr_1 (\sigma) (\bm{i}) = \bm{k}$, $\pr_2 (\sigma) (\bm{j}) = \bm{l}$.  This can occur only when $\sigma = (\pr_1(\sigma), \pr_2(\sigma)) \in \mathfrak{S}_p \times \mathfrak{S}_q$ with $\kappa(\pr_1(\sigma)) = s(\bm{i}, \bm{k})$, $\kappa(\pr_2(\sigma)) = s(\bm{j}, \bm{l})$.  Extracting only such terms from the sum and using the convergence $M_k(\mu[\lambda^{(n)}]) \to M_k(\mu)$ for all $k \ge 1$ as $n \to \infty$, we obtain the desired result.
\end{proof}

\begin{cor}
The fastest possible asymptotic growth of the mixed moment in Theorem \ref{thm:asymp-n}, with $p$ and $q$ fixed, occurs when $\bm{i} = \bm{k}$ and $\bm{j} = \bm{l}$.  In this case, as $n \to \infty$, we have
\begin{equation} \label{eqn:asymp-n-ik-jl}
\mathbb{E} \left[ 
\prod_{\alpha=1}^p \pi_1(H^{(n)})_{i_\alpha, i_\alpha} \cdot 
\prod_{\beta=1}^q \pi_2(H^{(n)})_{j_\beta, j_\beta}
\right] = M_1(\mu)^{p+q} \, c^q \, n^{p+q} \cdot \big( 1 + o(1) \big).
\end{equation}
\end{cor}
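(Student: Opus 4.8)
The plan is to read the claimed asymptotics off Theorem~\ref{thm:asymp-n} once we identify which index sequences maximize the exponent of $n$ in \eqref{eqn:asymp-n}. That exponent is $3(s(\bm i,\bm k)+s(\bm j,\bm l))-2(p+q)$, which is strictly increasing in each of $s(\bm i,\bm k)$ and $s(\bm j,\bm l)$, so the first step is to maximize these two quantities separately. Every permutation in $\mathfrak{S}_p$ has at most $p$ cycles, with $p$ cycles only for $\mathrm{id}_p$; hence $s(\bm i,\bm k)\le p$, and $s(\bm i,\bm k)=p$ exactly when $\mathrm{id}_p$ carries $\bm i$ to $\bm k$, i.e.\ when $\bm i=\bm k$. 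The same argument gives $s(\bm j,\bm l)\le q$ with equality iff $\bm j=\bm l$. Therefore the largest attainable power of $n$ is $3(p+q)-2(p+q)=p+q$, and it is attained precisely when $\bm i=\bm k$ and $\bm j=\bm l$, which establishes the first sentence of the corollary.

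For the second step, I would specialize \eqref{eqn:asymp-n} to the case $\bm i=\bm k$, $\bm j=\bm l$, so that $s(\bm i,\bm k)=p$ and $s(\bm j,\bm l)=q$. The double sum then collapses to a single term: the only $\sigma\in\mathfrak{S}_p$ with $\kappa(\sigma)=p$ is $\mathrm{id}_p$, and the only $\tau\in\mathfrak{S}_q$ with $\kappa(\tau)=q$ is $\mathrm{id}_q$. One has $\delta_{\mathrm{id}_p}(\bm i,\bm i)=\delta_{\mathrm{id}_q}(\bm j,\bm j)=1$, and since $\mathrm{id}_p$ has cycle type $(1,\dots,1)$ with $p$ parts, $M_{\mathrm{id}_p}(\mu)=M_1(\mu)^p$ and likewise $M_{\mathrm{id}_q}(\mu)=M_1(\mu)^q$. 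Substituting, the bracketed factor in \eqref{eqn:asymp-n} equals $M_1(\mu)^{p+q}$, the power of $c$ is $c^{\,p+2q-p-q}=c^{q}$, and the power of $n$ is $n^{p+q}$; this is exactly \eqref{eqn:asymp-n-ik-jl}.

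The argument is essentially bookkeeping, and the only point that deserves a moment's attention is the uniqueness of the maximizing index choice in the first step, which comes down to the elementary fact that on any fixed number of letters the identity is the unique permutation achieving the maximal cycle count. (As in the proof of Corollary~\ref{cor:largest-growth-m}, the case $M_1(\mu)=0$ should be read as giving an upper bound on the growth rather than an exact asymptotic.)
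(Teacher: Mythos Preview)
Your proof is correct and follows essentially the same approach as the paper: bound $s(\bm i,\bm k)\le p$ and $s(\bm j,\bm l)\le q$ with equality precisely when $\bm i=\bm k$ and $\bm j=\bm l$, then observe that in this case the double sum in \eqref{eqn:asymp-n} collapses to the single term $\sigma=\mathrm{id}_p$, $\tau=\mathrm{id}_q$. Your write-up is somewhat more explicit about the arithmetic of the exponents and the evaluation of $M_{\mathrm{id}_p}(\mu)$, and the parenthetical remark about the degenerate case $M_1(\mu)=0$ is a sensible addition.
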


\begin{proof}
For any $\bm{i}, \bm{j}, \bm{k}, \bm{l}$, we have $s(\bm{i}, \bm{k}) \le p$ and $s(\bm{j}, \bm{l}) \le q$.  These bounds are saturated only when $\bm{i} = \bm{k}$ and $\bm{j} = \bm{l}$, in which case the only leading-order term in (\ref{eqn:asymp-n}) is the term with $\sigma = \mathrm{id}_p \in \mathfrak{S}_p$ and $\tau = \mathrm{id}_q \in \mathfrak{S}_q$.
\end{proof}

\begin{remark} \label{rem:asymp-n-phys}
In this regime as well, we can make the alternative, stronger convergence assumptions, analogous to (\ref{eqn:asymp-m-phys-assump}):
\begin{align}
	\nonumber \Tr(\diag(\lambda^{(n)})) &= 1, \\
	\label{eqn:asymp-n-phys-assump} \Tr(\diag(\lambda^{(n)})^k) &\to T_k \ge 0 \ \textrm{ as $n \to \infty$,}
\end{align}
with a uniform bound $|\Tr(\diag(\lambda^{(n)})^k)| < C$ for all $n$ and $k$.  Again, these assumptions imply that the emprical measures $\mu[\lambda^{(n)}]$ converge in distribution to a Dirac mass at 0, but the results above can be rescaled to apply in this setting.  In place of (\ref{eqn:asymp-n}) we find
\begin{multline} \label{eqn:asymp-n-phys}
\mathbb{E} \left[ 
\prod_{\alpha=1}^p \pi_1(H^{(n)})_{i_\alpha, k_\alpha} \cdot 
\prod_{\beta=1}^q \pi_2(H^{(n)})_{j_\beta, l_\beta}
\right] \\
=  \left[ \sum_{\substack{ \sigma \in \mathfrak{S}_{p} \\ \kappa(\sigma) = s(\bm{i}, \bm{k})}} \sum_{\substack{ \tau \in \mathfrak{S}_{q} \\ \kappa(\tau) = s(\bm{j}, \bm{l})}} \delta_\sigma (\bm{i},\bm{k}) \, \delta_\tau (\bm{j},\bm{l}) \, c^{s(\bm{j}, \bm{l})-p-q} T_\sigma T_\tau \right] n^{s(\bm{i}, \bm{k})+ s(\bm{j}, \bm{l}) - 2(p + q)} \, \cdot \big(1 + o(1) \big),
\end{multline}
as can be obtained directly from (\ref{eqn:asymp-n-deriv}), while in place of (\ref{eqn:asymp-n-ik-jl}) we find
\begin{equation} \label{eqn:asymp-n-ik-jl-phys}
\mathbb{E} \left[ 
\prod_{\alpha=1}^p \pi_1(H^{(n)})_{i_\alpha, i_\alpha} \cdot 
\prod_{\beta=1}^q \pi_2(H^{(n)})_{j_\beta, j_\beta}
\right] = c^{-p} n^{-p-q} \cdot \big( 1 + o(1) \big).
\end{equation}
\end{remark}

\section*{Acknowledgments}
The work of the first author (SM) was supported by JSPS KAKENHI Grant Numbers 17K05281, 22K03233.  The work of the second author (CM) was supported by the National Science Foundation under grant number DMS 2103170, and by JST CREST program JPMJCR18T6.  The authors would like to thank Beno\^it Collins for helpful discussions during the preparation of this manuscript.

\bigskip

\end{document}